\newcommand\unit{\hbox{\rm 1\kern-2.8truept l}}
\newcommand\re{{\Re}\kern-1pt e}
\newcommand\im{{\Im}\kern-1pt m}
\newcommand\tr{\text{Tr}}
\newcommand{\C}{\mathbb{C}}
\newcommand\Lform{{\mathcal{L}}\kern-7.56pt\raise1.0pt\hbox{$-$}}
\newcommand{\Ll}{\mathcal{L}}
\newtheorem{definition}{Definition}
\newtheorem{theorem}[definition]{Theorem}
\newtheorem{proposition}[definition]{Proposition}
\newtheorem{example}{Example}
\newtheorem{remark}{Remark}
\newtheorem{lemma}[definition]{Lemma}
\newtheorem{corollary}[definition]{Corollary}
\newcommand{\bra}[1]{\langle\,#1\,\vert}
\newcommand{\ket}[1]{\vert\,#1\,\rangle}
\newcommand{\outerp}[2]{\ket{#1}\bra{#2}}
\newcommand{\D}{\text{D}}
\newcommand{\p}{\text{p}}
\newcommand{\Diag}{\text{Diag}}
\newcommand{\Id}{\text{I}}
\title{Thermodynamic formalism for continuous-time quantum Markov  semigroups: the detailed balance condition, entropy, pressure and equilibrium  quantum processes}
\author{Jader E. Brasil, Josue Knorst and Artur O. Lopes}
\begin{document}

\maketitle

\begin{abstract}

$M_n(\mathbb{C})$ denotes the set of  $n$ by $n$ complex matrices.
Consider  continuous time quantum   semigroups $\mathcal{P}_t= e^{t\, \mathcal{L}}$, $t \geq 0$, where $\mathcal{L}:M_n(\mathbb{C}) \to M_n(\mathbb{C})$ is the infinitesimal generator. If we assume that $\mathcal{L}(I)=0$, we will call  $e^{t\, \mathcal{L}}$, $t \geq 0$ a quantum Markov semigroup. Given a stationary density matrix $\rho=
\rho_{\mathcal{L}}$, for the  quantum Markov  semigroup  $\mathcal{P}_t$, $t \geq 0$,
we can define a  continuous time  stationary quantum Markov process, denoted by $X_t$, $t \geq 0.$  Under the detailed balance condition,
given an {\it a priori} Laplacian operator $\mathcal{L}_0:M_n(\mathbb{C}) \to M_n(\mathbb{C})$,
we will present a natural concept of entropy for a class of density matrices on $M_n(\mathbb{C})$. Given an Hermitian operator  $A:\mathbb{C}^n\to \mathbb{C}^n$ (which plays the role of an Hamiltonian), we will   study a version of the variational principle of pressure for $A$.  A density matrix $\rho_A$ maximizing pressure will be called an equilibrium density matrix.  From $\rho_A$ we will derive a  new  infinitesimal generator $\mathcal{L}_A$. Finally,   the continuous time quantum Markov  process defined by the semigroup $\mathcal{P}_t= e^{t\, \mathcal{L}_A}$, $t \geq 0$, and an initial stationary density matrix, will be called  the
continuous time  equilibrium   quantum Markov  process for the Hamiltonian $A$. It corresponds to the quantum thermodynamical equilibrium for the action of the Hamiltonian $A$.
\end{abstract}

{\bf Key words:}  continuous time quantum Markov  process, Lindbladian, detailed balance condition, entropy, pressure, equilibrium  quantum processes

{\bf MSC2010:} 81P45; 81p47; 81p16; 81P17

A short version of this work was published in  Open Systems and Information Dynamics, Volume 30, Issue 04  2350018 (2023)


\section{Introduction}

We are interested in continuous time  stationary quantum Markov process which corresponds to equilibrium for a quantum bath (interacting with a quantum system)  under the action of a certain given Hamiltonian.  Therefore, our results concern continuous-time quantum channels.

In \cite{CaMa} the authors present a detailed study of a nice version of the detailed balance condition for a continuous-time quantum Markov semigroup on $M_n(\mathbb{C})$, $n \in \mathbb{N}$ (see also \cite{JPW}, \cite{FR} and \cite{Pillet} for related results). In Theorem 4.2 in \cite{CaMa} it is explained that the detailed balance condition for a classical continuous-time  Markov Chain, with values on a finite state space, corresponds to the commutative part of the dynamical evolution of the continuous-time quantum Markov semigroup. Results of \cite{CaMa} and the detailed balance condition are used here in an essential way.

On page  75 in Section 9 in \cite{Kac}, and also on page 114 in Section 5 in \cite{Str},  for a  classical continuous-time  Markov Chain - with  infinitesimal generator $L$ - satisfying the detailed balance condition, a deviation function  (which is a form of entropy) is introduced and a variational principle (in some sense a  form of maximizing pressure) is considered  (see expression 9.18 in page 76 in \cite{Kac}). We would like to extend the results obtained  for the classical commutative setting to  the non-commutative setting  of quantum Markov semigroups satisfying the  detailed balance condition as described in \cite{CaMa}.

We will present a natural concept of entropy for a class of density matrices  (see Section \ref{enter}).
We point out that the dynamics of the flow in the set of matrices is encapsulated on the infinitesimal generator $\mathcal{L}$  and the entropy we consider here  is at the level of  this linear operator. In this sense, this concept of  entropy has no {\it direct} dynamical content. Our setting is the quantum channel version of the classical ones considered in \cite{Kac} and \cite{Str}.

After introducing entropy we will study a version of the variational principle of pressure  and its relation to an eigenvalue problem for a certain type of  transfer operator (see Section \ref{PPre} and expression \eqref{mamai6} in Section \ref{trans}). In classical Thermodynamic Formalism, the Ruelle operator plays this role. The Ruelle Theorem describes a relation of equilibrium states with a corresponding eigenvalue problem (see \cite{PP}).  The Ruelle operator is an infinite-dimensional version of the Perron-Frobenius operator. The transfer operator we consider here is not exactly an extension of the concept of Ruelle operator.
A density matrix  maximizing pressure will be called an equilibrium density matrix. We will provide examples in Section \ref{PPre}.

Our results are in some sense the quantum analogous of the reasoning delineated in  \cite{BEL}, \cite{LNT} and \cite{KLMN}, which considered the dynamics of  continuous-time dynamics (a flow) in the Skorohod space.

Our definition of entropy is also different from the ones in \cite{BKL1} and \cite{BKL2} which considered quantum channels in discrete-time dynamics.

Taking into account  the   concept and the notation described in section 5 in \cite{CaMa} we will denote by $ \mathcal{L}_0 $ (the Laplacian)  the generator of the heat semigroup. We will choose a special $ \mathcal{L}_0 $ (see Definition \ref{dde}) which will play the role of the {\it a priori} Laplacian.  Our special choice of $\mathcal{L}_0$ is the analogous of taking the normalized counting probability as  the {\it a priori}   probability in the classical definition of Kolmogorov-Shannon entropy (see discussion in \cite{LMMS}).

From  this $\mathcal{L}_0$ (which is fixed from now on) we will be able to define the detailed balanced condition (as described in \cite{CaMa}) and the  Laplacian-entropy.

\begin{definition} \label{defi} Given a density operator $\rho$ define the {\it Laplacian-entropy (entropy for short) by}
\begin{equation} \label{ent} h(\rho) =\,  \text{Tr } \,   [\,\rho^{1/2} \mathcal{L}_0^\dag (\rho^{1/2})\,],
\end{equation}
where we set $ \mathcal{L}_0$ by Definition \ref{dde}.
\end{definition}

 Our  definitions of entropy and pressure are quite natural. They are the non commutative extension of the concepts  considered in  the  classical setting of continuous-time Markov chains as described by M. Kac  in expressions (9.16) and (9.18) in Section 9  in \cite{Kac} and by D. W. Strook in Section 5 in \cite{Str}.

Expression (5.12) in \cite{Str} defines the so-called {\it rate function} $I$ in the setting of classical continuous-time  Markov Processes taking values on the compact metric space $E$. If $L$ is the infinitesimal generator, then  (5.12) means

$$\quad I(\nu)= - \inf_{u>0, \, u \in C(E)} \int \frac{Lu}{u} d\nu,$$
where $C(E)$ is the set of real continuous functions defined on $E$.

Later, for reversible processes, the above  formula simplifies to expression (5.18) in \cite{Str}, which claims
$$ I(\nu) = - \int \phi^{1/2} L\phi^{1/2} d\mu, \quad \phi=\frac{d\nu}{d\mu}.$$

In \cite{Str} it is used the term symmetric operator but  in other contexts, this would correspond to conditions like reversibility or the detailed balanced condition.

Under the  detailed balanced condition, in the quantum channel context, one should replace the role of $L$ by the generator of a QMS, which is usually  denoted by $\mathcal{L}$ (the Lindbladian). Probabilities are replaced by densities $\rho$ (states). In this case,  (5.12) in \cite{Str} corresponds here to
$$ I(\rho) = - \inf_{U>0} \tr(\rho\, U^{-1}\,\mathcal{L}_0\,U), $$
\noindent where the  infimum is taken over the positive matrices $U\in M_n$.

In \cite{Str} (see also \cite{Kac}) the variational principle is taken as
$$\lambda(V) = \sup_{\text{prob}\,\,\nu} (\int V d\nu - I(\nu)),$$
where $\lambda(V)$ is the main eigenvalue of a certain operator.

Denote
$D_{n} = \{\rho \ge 0:\tr(\rho)=1\}$.

In section \ref{PPre}  we consider an analogous problem:
given an Hermitian operator $A: \C^n \to \C^n$, we consider the variational problem:

\begin{equation} \label{PP21} P_{A}= \sup_{\,\rho\in D_{n}} \{h(\rho) + \tr(A\rho)\,\}.
\end{equation}

A matrix $\rho_A$  maximizing  $P_{A}$  will  be called an equilibrium density the operator for $A$. We call $P(A)$ the Laplacian-pressure (pressure for short) of the Hermitian operator $A: \C^n \to \C^n$.

A connection of $P_A$ of expression \eqref{PP21} with the  eigenvalue of a certain linear  operator is described in expression \eqref{mamai6} in Section \ref{trans} (see also \eqref{mamai5}). In this way we get  all elements for establishing a continuous time quantum channel version of the classical Ruelle operator (see \cite{PP}, \cite{LMMS}, \cite{BEL}, \cite{LNT}).

Our definition of entropy has a difference of sign when compared with the setting of \cite{Str}, so we wonder if there exists a connection between $h(\rho)$ and $-I(\rho)$.
In section \ref{Kaka} we will show this connection in the special case of the heat-semigroup with the {\it a priori}  generator $\mathcal{L}_0$ defined on section \ref{enter}. We will show that:

\begin{theorem}\label{a1} Given the density matrix $\rho$, then
$$ h(\rho) =  \inf_{A>0} \tr(\rho \,A^{-1}\, \mathcal{L}_0(A)).$$
\end{theorem}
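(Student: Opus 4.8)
The plan is to prove the claimed identity by showing that the functional $F(A) = \tr(\rho\, A^{-1}\mathcal{L}_0(A))$, defined for positive invertible $A$, is bounded below by $h(\rho)$ and attains this bound at $A = \rho^{1/2}$. The starting observation is that, by the detailed balance condition built into Definition \ref{dde}, the a priori Laplacian $\mathcal{L}_0$ is self-adjoint for the Hilbert--Schmidt inner product, i.e. $\mathcal{L}_0^\dag = \mathcal{L}_0$; hence $h(\rho) = \tr(\rho^{1/2}\mathcal{L}_0(\rho^{1/2}))$. Substituting $A = \rho^{1/2}$ gives $F(\rho^{1/2}) = \tr(\rho\,\rho^{-1/2}\mathcal{L}_0(\rho^{1/2})) = \tr(\rho^{1/2}\mathcal{L}_0(\rho^{1/2})) = h(\rho)$, so that $\inf_{A>0} F(A) \le h(\rho)$. (If $\rho$ is not faithful one restricts to its support, or lets $A$ approach $\rho^{1/2}$ through invertible matrices, so that the infimum is still approached.) The whole content of the theorem is therefore the reverse inequality $F(A) \ge h(\rho)$ for every $A > 0$.

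First I would treat the explicit generator of Definition \ref{dde}. Writing out $\mathcal{L}_0(A)$ and using $\mathcal{L}_0(I)=0$, the quantities collapse (up to the normalization fixed there) to $h(\rho) = \tfrac1n(\tr\rho^{1/2})^2 - 1$ and $F(A) = \tfrac1n \tr(A)\,\tr(\rho A^{-1}) - 1$, so the desired inequality becomes the scalar statement
\begin{equation}\label{csplan}
(\tr\rho^{1/2})^2 \;\le\; \tr(A)\,\tr(\rho A^{-1}), \qquad A>0 .
\end{equation}
This is exactly a Hilbert--Schmidt Cauchy--Schwarz inequality: writing $\tr\rho^{1/2} = \tr\!\big(A^{1/2}\,A^{-1/2}\rho^{1/2}\big) = \inner{A^{1/2}}{A^{-1/2}\rho^{1/2}}$ and applying Cauchy--Schwarz gives $|\tr\rho^{1/2}| \le \|A^{1/2}\|_{\mathrm{HS}}\,\|A^{-1/2}\rho^{1/2}\|_{\mathrm{HS}} = \sqrt{\tr A}\,\sqrt{\tr(\rho A^{-1})}$, which is \eqref{csplan}. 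Equality holds precisely when $A^{1/2}$ is proportional to $A^{-1/2}\rho^{1/2}$, i.e. when $A \propto \rho^{1/2}$, matching the minimizer found above.

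For a general self-adjoint $\mathcal{L}_0$ coming from the detailed balance structure of \cite{CaMa} I would run the same idea term by term. Using the tracial representation $\mathcal{L}_0(X) = \sum_k\big(L_k^\dag X L_k - \tfrac12\{L_k^\dag L_k, X\}\big)$ with a family $\{L_k\}$ closed under taking adjoints (this is what makes $\mathcal{L}_0$ Hilbert--Schmidt self-adjoint), one checks the carr\'e du champ identity $h(\rho) = -\tfrac12\sum_k \|[L_k,\rho^{1/2}]\|_{\mathrm{HS}}^2$, and one expands $F(A)$ into the corresponding jump contributions. Each jump term is then bounded below by applying Cauchy--Schwarz to the pair $\big(L_k A^{1/2},\ \rho^{1/2}A^{-1/2}L_k^\dag\big)$ (or its symmetrized form), which reproduces exactly the commutator term of $h(\rho)$, and summation over $k$ yields $F(A) \ge h(\rho)$. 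The main obstacle is the non-commutative bookkeeping in this last step: because $\rho$, $A$ and the $L_k$ do not commute, one must group each Lindblad term with its adjoint partner and insert the factors $A^{1/2}, A^{-1/2}$ so that the Cauchy--Schwarz estimate closes, and one must check that the equality cases of all the term-by-term inequalities are realized simultaneously by the single choice $A = \rho^{1/2}$. Verifying this compatibility --- the exact quantum analogue of the classical pairwise decoupling in which the optimal ratios $u_j/u_i = \sqrt{\phi_j/\phi_i}$ are globally consistent --- is the delicate point, whereas the scalar reduction \eqref{csplan} already shows why the result is true and what the optimizer must be.
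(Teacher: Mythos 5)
Your first two paragraphs give a complete and correct proof of the theorem, and they coincide with the paper's strategy only in the first step. Like the paper (Remark \ref{simplifica}), you use the explicit collapse $\mathcal{L}_0(A)=2\tr(A)I-2nA$ (your constants differ from the paper's by the harmless factor $2n$; with the paper's normalization $h(\rho)=2\tr(\rho^{1/2})^2-2n$ as in Proposition \ref{h_eigen_dependence}, but both sides of the claimed identity scale the same way in $\mathcal{L}_0$), so that the statement reduces to $\inf_{A>0}\tr(A)\,\tr(\rho A^{-1})=\tr(\rho^{1/2})^2$. The difference is in how this reduced inequality is established. The paper substitutes $A=B\rho^{1/2}$ and invokes its Lemma \ref{inflemma}, $\inf_{B>0}\tr(BU)\tr(UB^{-1})=\tr(U)^2$, proved by expanding in an eigenbasis of $B$ and using $x+x^{-1}\ge 2$; you instead obtain the bound in one line from Cauchy--Schwarz in the Hilbert--Schmidt inner product, $\tr(\rho^{1/2})=\langle A^{1/2},A^{-1/2}\rho^{1/2}\rangle\le\sqrt{\tr(A)}\sqrt{\tr(\rho A^{-1})}$. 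Your version buys two things: it avoids the paper's change of variables, which is delicate as written (for $A>0$ the matrix $B=A\rho^{-1/2}$ need not be positive, so ``the infimum will not change'' requires an extra justification; likewise the paper's lemma implicitly needs $U\ge 0$, true in the application $U=\rho^{1/2}$), and it identifies the equality case $A\propto\rho^{1/2}$ directly, so the infimum is attained at $A=\rho^{1/2}$ when $\rho$ is faithful. Your third paragraph, on general GNS-symmetric generators, is the only incomplete part of the proposal, but it addresses a generalization beyond the statement: Theorem \ref{a1} concerns only the {\it a priori} Laplacian of Definition \ref{dde}, so nothing in it is needed.
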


Following \cite{CaMa}, we will present  in Section \ref{cla} the classical Markov Chain associated with a continuous-time quantum channel and we will provide examples.

The present work is part of the PhD. thesis of J. Brasil and J. Knorst in 
Graduate Program in Math. in UFRGS.

\section{An outline of the main prerequisites} \label{trr}

Given a  linear operator $A: \mathbb{C} \to \mathbb{C}$, its dual (with respect to the canonical inner product), is denoted by $A^*: \mathbb{C} \to \mathbb{C}$.

Denote by
$M_n(\mathbb{C})=M_n$ the set of  $n$ by $n$ complex matrices with the GNS inner product $<A,B> = \tr \, (A^* \, B)$. Given a linear operator $T: M_n \to M_n$, its dual with respect to this inner product is denoted by $T^\dag: M_n \to M_n.$ That is, for all matrices $A,B$ we get
$$ <T(A), B> = <A, T^\dag (B)>.$$

We denote by $\mathfrak{1}$ the diagonal matrix with entries $\frac{1}{n}$. Then, $\mathfrak{1}$ is a density matrix and also the unity of the $C^*$-algebra $M_n(\mathbb{C})$. We denote by $\mathfrak{G}_{+}$ the set of invertible density matrices (operators) $\rho: M_n \to M_n$.

We will consider  {\bf continuous time quantum  semigroups} (QS) the ones given by
$\mathcal{P}_t= e^{t\, \mathcal{L}}$, $t \geq 0$, where $\mathcal{L}:M_n(\mathbb{C}) \to M_n(\mathbb{C})$ is the infinitesimal generator (see Definition 5.5.1 and also Section 9.3.2 in \cite{Chang}).
The linear operator $\mathcal{L}$ should satisfy the conditional complete
positivity property (see section 5 in \cite{Chang} or section 6 in \cite{Wolf}). We assume that $\mathcal{L}(A^*) =(\mathcal{L}(A))^*$, for all $A \in M_n(\mathbb{C})$. Given a selfadjoint matrix $A\in M_n(\mathbb{C})$, the dynamical evolution
$t \to e^{t \mathcal{L}} (A)$ is called the Heisenberg dynamical evolution.
Given a density matrix $\rho\in M_n(\mathbb{C})$, the dynamical evolution
$t \to e^{t \mathcal{L}^\dag} (\rho)$ is called the  Schrödinger dynamical evolution.

If we assume that $\mathcal{L}(I)=0$, we will call $\mathcal{P}_t= e^{t\, \mathcal{L}}$, $t \geq 0$, the continuous time {\bf quantum Markov  semigroup} (QMS)  associated to $\mathcal{L}$ (see Definition 5.5.2  and also section  7 in \cite{Chang}). It is known that in this case   $ e^{t\, \mathcal{L}}(\mathfrak{1})= \mathfrak{1}$, for all $t\geq 0.$ Continuous time  quantum Markov semigroups provide a convenient mathematical description of the irreversible
dynamics of an open quantum system.

If $\rho=\rho_\mathcal{L}$ is such that $\mathcal{L}^\dag (\rho)=0$, then for all $t \geq 0$, we get $ e^{t\, \mathcal{L}^\dag}(\rho)= \rho$ and we say that $\rho$ is the {\bf stationary density matrix for the  continuous time  quantum Markov semigroup} with infinitesimal generator $\mathcal{L}$. Section 9.4 in \cite{Chang} presents a discussion on the uniquenes of the stationary matrix $\rho$.

We call  $X_t$, $t\geq 0$,  the {\bf continuous time  quantum Markov  process} (QMP) associated to the infinitesimal generator $\mathcal{L}$,
the process associated to the pair $(e^{t\, \mathcal{L}}$, $\rho_\mathcal{L})$, $t\geq 0$. We can ask questions about ergodicity for such process (see Section 11  in \cite{Chang}).

Given the (QMP) associated to $\mathcal{L}$ and the stationary density operator $\rho$,
take an observable (a self-adjoint matrix) $A\in M_n$. Then, we get  that
$$ t \to \text{Tr} (\rho \, e^{t \mathcal{L}} (A))$$
describes the time evolution of the expected value of the observable $A$.

We say that $\mathcal{L}$ is irreducible if for every non-zero matrix $A \geq 0$, and every strictly positive $t>0$, we have $e^{t\, \mathcal{L}}(A)> 0.$ We will also assume that $\mathcal{L}$ is irreducible (see Sections 10 and  11 in \cite{Chang}).

Given $\sigma \in \mathfrak{G}_{+}$, consider the inner product $<\,,\,>_\sigma$ in the set of matrices in $M_n$ given by
$<A,B>_\sigma = \tr \, (A^* B \sigma)=<A, B \sigma>.$

\begin{definition} Given $\sigma \in \mathfrak{G}_{+}$, we say that the QMS $e^{t \mathcal{L}}, t \geq 0$, satisfies the $\sigma$-detailed balance condition if $\mathcal{L}$ is symmetric with respect to $<\,,\,>_\sigma$. That is, for all matrices $A,B\in  M_n$ we get
$$ < \mathcal{L} (A), B>_\sigma =<  A,  \mathcal{L}(B)>_\sigma. $$

\end{definition}

Given $\sigma \in \mathfrak{G}_{+}$, if the QMS $e^{t \mathcal{L}}, t \geq 0$, satisfies the $\sigma$-detailed balance condition, then, $\sigma$ is stationary for the evolution of the semigroup $e^{t \mathcal{L}^\dag}, t \geq 0$ (see Lemma \ref{iin}).

The explicit form of the infinitesimal generator of a  continuous-time quantum Markov  semigroups  satisfying the detailed balance condition is described by
expression (3.4) in \cite{CaMa} (see our expression \eqref{gre}).

\begin{definition}
\label{dde}
We denote $\mathcal{L}_0$ the infinitesimal generator  satisfying d.b.c.   where we take $\sigma=\mathfrak{1}$. $\mathcal{L}_0$  will be called the Laplacian (see Section 3 in \cite{CaMa}).
\end{definition}

The semigroup   $\mathcal{P}_t= e^{t\, \mathcal{L}_0}$, $t \geq 0$, describes the unperturbed continuous time quantum channel.

Given the {\it a priori} Laplacian operator $\mathcal{L}_0:M_n(\mathbb{C}) \to M_n(\mathbb{C})$,
we will present a natural concept of entropy for a class of density matrices $\rho$ on $M_n(\mathbb{C})$ (see Definition \ref{defient}).

Given a Hermitian operator  $A:\mathbb{C}^n\to \mathbb{C}^n$ (which plays the role of minus the  Hamiltonian), we will consider in Section \ref{PPre} a variational principle of pressure for $A,$ which is given by Definition \ref{defipre}.

A density matrix $\rho_A$ maximizing pressure will be called an equilibrium density matrix for $A$. This matrix (in fact $\rho_A^{1/2}$) will satisfy an eigenvalue property  for a certain linear operator $\mathfrak{L}_A$  to be described in Section \ref{trans}. From $\rho_A$ we will derive a  new  infinitesimal generator $\mathcal{L}_A$. Finally,   the continuous-time quantum Markov  Process $X_t$, $t \geq 0$,    associated to  $\mathcal{P}_t= e^{t\, \mathcal{L}_A}$, $t \geq 0$, and $\rho_A$, will be called  the
continuous-time  equilibrium   quantum Markov  semigroup for the Hamiltonian $A$. This new process describes a continuous-time quantum channel after the perturbation by the selfadjoint operator $A$.

      \section{The heat semigroup and entropy of density operators}  \label{enter}

      In this section the inner product in $M_n$ is $<A,B>=$ Tr $(A^* \,B)$.

      Denote by $e_j$, $j=1,...,n$, the canonical base in $\mathbb{C}^n,$
      and by $$V_{i,j} = \outerp{e_i}{e_j}:\mathbb{C}^n \to \mathbb{C}^n,$$
      where  $i, j=1,...,n$. Note that $V_{i,j}^* = \outerp{e_j}{e_i}$.

We denote by $V_{i,j}$ the matrix which is zero in all entries, up to the entry $i,j$, where it has the value $1$.

We denote by $\mathfrak{1}$ the operator identity $I$ times $\frac{1}{n}.$ The matrix $\mathfrak{1}$ describes an invertible density operator.

$\mathcal{L}_0$ denotes  the infinitesimal generator  satisfying d.b.c.   for $\sigma=\mathfrak{1}$.

One can show that $V_{i,j},$ $i,j=1,...,n$,
is an orthonormal basis for  $\mathcal{L}_0$ associated to the eigenvalue $0$.

      Following section 5 in \cite{CaMa} we call $ \mathcal{L}_0 $ (the Laplacian)  the generator of the heat semigroup (the Laplacian)
    \begin{equation} \label{grer}
        A \to \mathcal{L}_0 (A) =
         \sum_{i,j=1}^{n}  \,( V_{i,j}^*\,[A,V_{i,j}] +   \,[V_{i,j},A]\,  V_{i,j}^*)
    \end{equation}

This operator is negative-semi definite.
(see page 1827 in \cite{CaMa} and also \eqref{gre} and \eqref{gre1} of next section. Note that $ \mathcal{L}_0 (I)=0.$

One can show that $ \mathcal{L}_0^\dag =  \mathcal{L}_0 $ and Tr $(\mathfrak{1}\, \mathcal{L}_0 (A))=0$, for all $A \in M_n$.

Note  that
\begin{equation} \label{grert}\mathcal{L}_0^\dag (\rho) =     \sum_{i,j=1}^{n}   \,([ V_{i,j}\, \rho, V_{i,j}^*] +   \,[V_{i,j},\rho \, V_{i,j}^*]\, ).
\end{equation}

$\sigma=\mathfrak{1}$ is  invariant for the flow $e^{ t \mathcal{L}_0^{\dagger}}$.

\begin{definition} \label{defient} Given a density operator $\rho$ define the {\it Laplacian-entropy}
\begin{equation} \label{ent} h(\rho) =\,  \tr \, [\,\rho^{1/2} \mathcal{L}_0^\dag (\rho^{1/2})\,]
\end{equation}
\end{definition}

This definition is consistent with expression (5.18) on page 113 in \cite{Str}.

Our main result in this section is the explicit expression for entropy to be described  by Proposition \ref{h_eigen_dependence}.
\medskip

First, we want to show the following Lemma:

\begin{lemma}
$h( \mathfrak{1})=0.$
\end{lemma}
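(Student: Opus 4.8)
The plan is to reduce the claim to the two structural facts about $\mathcal{L}_0$ already recorded above: that it is self-adjoint, $\mathcal{L}_0^\dag=\mathcal{L}_0$, and that it annihilates the identity, $\mathcal{L}_0(I)=0$. First I would note that since $\mathfrak{1}=\frac{1}{n}I$ is a positive scalar multiple of the identity, its unique positive square root is simply $\mathfrak{1}^{1/2}=\frac{1}{\sqrt n}\,I$; there is nothing to verify here beyond $\sqrt{1/n}=1/\sqrt n$.

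Substituting this into the definition of the Laplacian-entropy (Definition \ref{defient}) and using linearity of $\mathcal{L}_0^\dag$ gives
$$ h(\mathfrak{1}) \;=\; \tr\!\left[\tfrac{1}{\sqrt n}\,I\;\mathcal{L}_0^\dag\!\Big(\tfrac{1}{\sqrt n}\,I\Big)\right] \;=\; \frac{1}{n}\,\tr\!\left[\mathcal{L}_0^\dag(I)\right]. $$
I would then invoke $\mathcal{L}_0^\dag=\mathcal{L}_0$ to replace $\mathcal{L}_0^\dag(I)$ by $\mathcal{L}_0(I)$ and apply $\mathcal{L}_0(I)=0$. The argument of the trace then vanishes identically, so that $h(\mathfrak{1})=\frac{1}{n}\,\tr(0)=0$, as claimed.

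I do not anticipate any genuine obstacle: all of the mathematical content is carried by the two previously established identities, and the remaining manipulation is routine. The only points needing (trivial) care are the explicit evaluation of the square root of the scalar matrix $\mathfrak{1}$ and the correct use of self-adjointness. One could equally avoid appealing to $\mathcal{L}_0(I)=0$ directly and instead use the recorded identity $\tr(\mathfrak{1}\,\mathcal{L}_0(A))=0$, valid for every $A$: taking $A=I$ together with $\mathcal{L}_0^\dag=\mathcal{L}_0$ yields $\tr[\mathcal{L}_0^\dag(I)]=n\,\tr(\mathfrak{1}\,\mathcal{L}_0(I))=0$, giving the same conclusion and confirming internal consistency.
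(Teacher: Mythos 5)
Your proof is correct, but it takes a genuinely different route from the paper's. The paper proves the lemma by brute-force computation: it plugs $\mathfrak{1}^{1/2}$ into the explicit formula \eqref{grert} for $\mathcal{L}_0^\dag$, uses that $\mathfrak{1}^{1/2}$ is a scalar multiple of $I$ (hence commutes with every $V_{i,j}$) to reduce everything to $2\,\mathfrak{1}^{1/2}\sum_{i,j}[V_{i,j},V_{i,j}^*]$, computes $[V_{i,j},V_{i,j}^*]=\outerp{e_i}{e_i}-\outerp{e_j}{e_j}$, and observes that the sum cancels by the symmetry pairing $(i,j)\leftrightarrow(j,i)$; this yields the stronger intermediate fact $\mathcal{L}_0^\dag(\mathfrak{1}^{1/2})=0$, i.e.\ the trace argument vanishes as a matrix before any trace is taken. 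You instead lean on the two structural facts the paper merely asserts (``one can show that'') just before the lemma, namely $\mathcal{L}_0^\dag=\mathcal{L}_0$ and $\mathcal{L}_0(I)=0$, together with linearity. Given those facts, your argument is shorter and cleaner, and it makes transparent \emph{why} the result holds (unitality plus self-adjointness of the Laplacian with respect to the $\mathfrak{1}$-inner product); its cost is that it is not self-contained, since the self-adjointness of $\mathcal{L}_0$ is exactly the kind of claim the paper's commutator computation is implicitly substantiating. Note also that you could shortcut even further: the paper states that $\mathfrak{1}$ is invariant for $e^{t\mathcal{L}_0^\dag}$, i.e.\ $\mathcal{L}_0^\dag(\mathfrak{1})=0$, and since $\mathfrak{1}^{1/2}=\sqrt{n}\,\mathfrak{1}$, linearity alone gives $\mathcal{L}_0^\dag(\mathfrak{1}^{1/2})=0$ without invoking self-adjointness at all — this recovers precisely the paper's intermediate conclusion.
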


\begin{proof}
From \eqref{grert}
$$\mathcal{L}_0^\dag (\rho^{1/2}) =     \sum_{i,j=1}^{n}   \,([ V_{i,j}\, \rho^{1/2}, V_{i,j}^*] +   \,[V_{i,j},\rho^{1/2} \, V_{i,j}^*]\, ).$$

Then,
$$\mathcal{L}_0^\dag (\mathfrak{1}^{1/2}) =  \sum_{i,j=1}^{n} \,([ V_{i,j}\, \mathfrak{1}^{1/2}, V_{i,j}^*] +   \,[V_{i,j},\mathfrak{1}^{1/2} \, V_{i,j}^*]\, )= \mathfrak{1}^{1/2} \, 2 \sum_{i,j=1}^{n} \,([ V_{i,j}, V_{i,j}^*].$$

Note that
$$ \,[ V_{i,j}, V_{i,j}^*] = $$
$$   \,  \outerp{e_i }{ e_i}   -    \outerp{e_j }{ e_j}\, .$$
For each pair $(i,j)$ there is a correspondent $(j,i).$ From this follows that
$$\mathcal{L}_0^\dag (\mathfrak{1}^{1/2}) =    \mathfrak{1}^{1/2} \, 2 \sum_{i,j=1}^{n}      \outerp{e_i }{ e_i}   -   \outerp{e_j}{e_j} =0.$$

Then, $h( \mathfrak{1})=0.$

\end{proof}
\bigskip

Consider now a general density operator $\rho\in \mathfrak{S}_{+}.$  We want to estimate $h(\rho).$

Denote $\rho^{1/2} =  \sum_{r,s=1}^{n}  c_{r s}  \,  \outerp{e_r}{e_s}   $.


For $i,j$ fixed
$$  [V_{i,j}\, \rho^{1/2}, V_{i,j}^*] = \big[V_{i,j}\,  \big(\,\sum_{r,s=1}^{n}  c_{r s}  \,  \outerp{e_r}{e_s} \,\big), V_{i,j}^*\big] =  $$
$$ V_{i,j}\,  \big(\,\sum_{r,s=1}^{n}  c_{r s}  \,  \outerp{e_r}{e_s} \,\big)\,V_{i,j}^* - \,V_{i,j}^*\, V_{i,j}\,  \big(\sum_{r,s=1}^{n}  c_{r s}  \,  \outerp{e_r}{e_s} \,\big)=  $$
$$ V_{i,j}\,  \big(\sum_{r,s=1}^{n}  c_{r s}  \,  \outerp{e_r}{e_s} \,\big)\,V_{i,j}^* - \outerp{e_j}{e_j} \,  \big( \sum_{r,s=1}^{n}  c_{r s}  \,  \outerp{e_r}{e_s} \,\big)=  $$
$$ V_{i,j}\,  \big(\sum_{r,s=1}^{n}  c_{r s}  \,  \outerp{e_r}{e_s} \,\big)\, \outerp{e_j}{e_i} - \,\sum_{s=1}^{n}  c_{j s}  \,  \outerp{e_j}{e_s} \,=  $$


$$ \outerp{e_i}{e_j} \,\sum_{r=1}^{n}  c_{r j}  \,  \outerp{e_r}{e_i}   - \,\sum_{s=1}^{n}  c_{j s}  \,  \outerp{e_j}{e_s} \,=  $$
$$ c_{j\, j}\,\outerp{e_i}{e_i}  - \,\sum_{s=1}^{n}  c_{j s}  \,  \outerp{e_j}{e_s} \,.  $$

On the other hand, for $i,j$ fixed
$$  [V_{i,j}, \, \rho^{1/2}\, V_{i,j}^*] = \big[V_{i,j}\, ,\, \big(\sum_{r,s=1}^{n}  c_{r s}  \,  \outerp{e_r}{e_s} \,\big)\, V_{i,j}^*\big] =  $$
$$ V_{i,j}\,  \big(\sum_{r,s=1}^{n}  c_{r s}  \,  \outerp{e_r}{e_s} \, \big)\,V_{i,j}^* - \,  \big(\sum_{r,s=1}^{n}  c_{r s}  \,  \outerp{e_r}{e_s} \,\big)\,V_{i,j}^*\, V_{i j}\,=  $$
$$ c_{j\, j}\,\outerp{e_i}{e_i}  -   \big( \sum_{r,s=1}^{n}  c_{r s}  \,  \outerp{e_r}{e_s} \,\big)\outerp{e_j}{e_j} \,=  $$
$$  c_{j\, j}\,\outerp{e_i}{e_i}  - \,\sum_{r=1}^{n}  c_{r j}  \,  \outerp{e_r}{e_j} \,.  $$

Finally,
$$\mathcal{L}_0^\dag (\rho^{1/2}) =     \sum_{i,j=1}^{n}   \,\big([ V_{i,j}\, \rho^{1/2}, V_{i,j}^*] +   \,[V_{i,j},\rho^{1/2} \, V_{i,j}^*]\, \big)=$$
$$  \sum_{i,j=1}^{n} \Big(\, c_{j\, j}\,\outerp{e_i}{e_i}  - \,\sum_{s=1}^{n}  c_{j s}  \,  \outerp{e_j}{e_s} \, +  $$
$$ + \, c_{j\, j}\,\outerp{e_i}{e_i}  - \,\sum_{r=1}^{n}  c_{r j}  \,  \outerp{e_r}{e_j} \Big) = $$
$$  2 \,  \sum_{j=1}^{n} \,c_{j\, j}\, I - \sum_{i,j=1}^{n} \,\,\Big( \,    \sum_{s=1}^{n}  c_{j s}  \,  \outerp{e_j}{e_s}  +  \,\sum_{r=1}^{n}  c_{r j}  \,  \outerp{e_r}{e_j}    \, \Big)  =$$
$$ 2 \,  \sum_{j=1}^{n} \,c_{j\, j}\, I - n \sum_{j=1}^{n} \,\,\Big( \,    \sum_{s=1}^{n}  c_{j s}  \,  \outerp{e_j}{e_s}  +  \,\sum_{r=1}^{n}  c_{r j}  \,  \outerp{e_r}{e_j}    \, \Big).$$
Then,
$$\rho^{1/2} \mathcal{L}_0^\dag (\rho^{1/2})=$$
$$ \sum_{u,v=1}^{n}  c_{u v}  \,  \outerp{e_u}{e_v}  \,\,\, \,2 \,  \sum_{j=1}^{n} \,c_{j\, j}\, I\,-  $$
$$n\, \sum_{u,v=1}^{n}  c_{u v}  \,  \outerp{e_u}{e_v}  \,\,\Big(\,\sum_{j=1}^{n}  \,\,\,  \,\sum_{s=1}^{n}  c_{j s}  \,  \outerp{e_j}{e_s}  + \,\sum_{j=1}^{n}  \,\sum_{r=1}^{n}  c_{r j}  \,  \outerp{e_r}{e_j}   \,\Big) =$$
$$\sum_{u,v=1}^{n}  c_{u v}  \,  \outerp{e_u}{e_v}  \,\,\, \,2 \,  \sum_{j=1}^{n} \,c_{j\, j}\, I\,- $$
$$n\, \sum_{u,v=1}^{n}  c_{u v}  \, \ket{e_u}\Big( \sum_{s=1}^{n}\, c_{vs}\,\bra{e_s} \Big)   +  n\, \sum_{u,v=1}^{n}  c_{u v} \, \ket{e_u} \Big( \sum_{j=1}^{n}  c_{v j}  \,  \bra{e_j} \Big)  \,\, =$$
$$\sum_{u,v=1}^{n}  c_{u v}  \,  \outerp{e_u}{e_v}  \,\,\, \,2 \,  \sum_{j=1}^{n} \,c_{j\, j}\, I\,- 2\, n\,\, \sum_{u,v=1}^{n}  c_{u v}  \,\ket{e_u}\,\Big(\sum_{s=1}^{n}\, c_{v s}\, \bra{e_s} \Big)    =$$
$$ 2\,\sum_{j=1}^{n} \,c_{j\, j} \,\sum_{u,v=1}^{n}  c_{u v}  \,  \outerp{e_u}{e_v}  \,\,\,  - 2\, n\,\, \sum_{u,v=1}^{n}\,\sum_{s=1}^{n}c_{u v} \,c_{v s} \, \outerp{e_u}{e_s}.
$$

The diagonal of $\rho$ is $\rho_{11},\rho_{22},...,\rho_{nn}$ and $\sum_{j=1}^{n}     \,c_{j\, j}= \sum_{j=1}^{n}  \sqrt{\lambda_j}$, where $\lambda_j$, $j=1,...,d$, are the eigenvalues of $\rho$. Note that $\sup (\sum_{j=1}^{n}  \sqrt{\lambda_j})^2 = n$ and $\inf (\sum_{j=1}^{n}  \sqrt{\lambda_j})^2 =1$.

Therefore, for fixed $n$ and a given $\rho$
$$h(\rho)=\tr \, [\rho^{1/2} \mathcal{L}_0^\dag (\rho^{1/2})]=  2\,\sum_{j=1}^{n} \,c_{j\, j}\sum_{u=1}^{n} \,c_{u\, u}  -2\, n\,\sum_{v=1}^{n}\, \sum_{s=1}^{n} c_{s v} c_{v s}=$$
$$
2\,\Big(\sum_{j=1}^{n}     \,c_{j\, j}\Big)^2 -2\, n\,\sum_{s=1}^{n} \rho_{s s} =  2\,\Big(\sum_{j=1}^{n}  \sqrt{\lambda_j}\Big)^2     - 2 n \leq 0. $$

Note $h(\rho)$ can be very negative if $n$ is large.
We get  the following proposition by looking at this last inequality:

\begin{proposition}\label{h_eigen_dependence}
  The entropy $h$ depends only on $\{\lambda_{i}\}$ the eigenvalues of $\rho$ and
  \[h(\rho) = 2\tr \,(\rho^{1/2})^{2} - 2n = 2\left(\sum_{j=1}^{n}\sqrt{\lambda_{i}}\right)^{2} - 2n.\]
\end{proposition}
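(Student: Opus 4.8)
The plan is to compute $h(\rho)$ directly from its definition and extract the two claimed consequences: that the entropy depends only on the spectrum of $\rho$, and that it equals $2\,\tr\,(\rho^{1/2})^2 - 2n$. Since Definition \ref{defient} sets $h(\rho) = \tr\,[\rho^{1/2}\mathcal{L}_0^\dag(\rho^{1/2})]$, everything follows once I have a workable expression for $\rho^{1/2}\mathcal{L}_0^\dag(\rho^{1/2})$. Writing $\rho^{1/2} = \sum_{r,s} c_{rs}\,\outerp{e_r}{e_s}$ and using the explicit form \eqref{grert} of $\mathcal{L}_0^\dag$, the first task is to evaluate each commutator $[V_{i,j}\,\rho^{1/2}, V_{i,j}^*]$ and $[V_{i,j},\rho^{1/2}\,V_{i,j}^*]$ using $V_{i,j} = \mathfrak{I}_{i,j} = \outerp{e_i}{e_j}$.

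The computational heart is bookkeeping with the $\outerp{e_a}{e_b}$ matrix units: $\outerp{e_a}{e_b}\,\outerp{e_c}{e_d} = \delta_{bc}\,\outerp{e_a}{e_d}$. Applying this to each term and then summing over $i,j = 1,\dots,n$ collapses the double sum, with the key simplification that many terms contribute a factor $n$ once the index $i$ is summed freely. The resulting expression for $\mathcal{L}_0^\dag(\rho^{1/2})$ splits into a term proportional to the identity, $2\sum_j c_{jj}\,I$, and off-diagonal-type contributions. I then left-multiply by $\rho^{1/2} = \sum_{u,v} c_{uv}\,\outerp{e_u}{e_v}$ and take the trace; the trace picks out the diagonal, so $\tr\,[\outerp{e_u}{e_s}] = \delta_{us}$, which converts the remaining matrix-unit sums into contractions of the $c$-coefficients.

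Two facts then finish the argument. First, $\sum_j c_{jj} = \tr\,(\rho^{1/2}) = \sum_j \sqrt{\lambda_j}$, where $\lambda_j$ are the eigenvalues of $\rho$, since $\rho^{1/2}$ is positive with eigenvalues $\sqrt{\lambda_j}$. Second, the contracted term $\sum_{v,s} c_{sv}c_{vs}$ equals $\tr\,((\rho^{1/2})^2) = \tr\,(\rho) = \sum_j \lambda_j = 1$ because $\rho$ is a density matrix, so this piece contributes exactly $-2n$. Assembling these gives $h(\rho) = 2\big(\sum_j \sqrt{\lambda_j}\big)^2 - 2n = 2\,\tr\,(\rho^{1/2})^2 - 2n$, and since both terms depend only on the eigenvalues, the spectral dependence is immediate.

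The main obstacle is purely organizational: keeping the four matrix-unit products straight through the commutators and ensuring the index relabelings (between the summation variables $r,s$ in $\rho^{1/2}$ and $u,v$ after multiplication) are consistent, so that the factor of $n$ appears in the correct term and the diagonal/off-diagonal pieces separate cleanly. There is no conceptual difficulty and no need for spectral theory beyond identifying $\sum_j c_{jj}$ and $\sum_{v,s} c_{sv}c_{vs}$ with the traces $\tr\,(\rho^{1/2})$ and $\tr\,(\rho) = 1$; the inequality $h(\rho) \le 0$ then follows from $\big(\sum_j \sqrt{\lambda_j}\big)^2 \le n$ by Cauchy--Schwarz, though that bound is not needed for the equality itself.
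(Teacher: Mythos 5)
Your proposal is correct and follows essentially the same route as the paper: expand $\rho^{1/2}=\sum_{r,s}c_{rs}\outerp{e_r}{e_s}$, evaluate the commutators in \eqref{grert} via matrix-unit algebra, pick up the factor $n$ from the free index, multiply by $\rho^{1/2}$, and take the trace, identifying $\sum_j c_{jj}=\tr(\rho^{1/2})$ and $\sum_{v,s}c_{sv}c_{vs}=\tr(\rho)=1$. This matches the paper's computation step for step, including the final identification of the $-2n$ term.
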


Note that as $\sum_{j=1}^{n} \lambda_j=1$, the maximal value of $h(\rho) $ is zero, and this happens  when all eigenvalues $\lambda_j=1$  are equal to $\frac{1}{n}$. The maximal value of entropy is attained by the density matrix $\mathfrak{1}$.

\bigskip

\begin{remark} \label{rere}
For fixed $A$ we denote $\partial_{i,j} (A) = [V_{i,j}, A]$ and $\partial_{i,j}^\dag (A) = [V_{i,j}^*, A].$

$\partial_{i,j}$ is a version of the momentum operator $\frac{1}{i} \frac{\partial}{\partial x}$ acting on the set $L^2$ of functions for the Lebesgue probability on the circle.  Indeed, denote by $D$ the operator $g \to D(g) =\frac{1}{i} g'$.
For fixed $a:[0,1) \to \mathbb{R},$  take the multiplication operator  $g \,\to  a \, g$ acting on functions $g$. Then, the operator

$$g \to D( a g) - a D(g) = \frac {1}{i}  a' \,g,$$
describes multiplication by  $\frac {1}{i}  a' = \frac{1}{i} \frac{\partial \,a}{\partial x}$.


We point out that   $\sum_{i,j} \partial_{i,j} \partial_{i,j}^\dag $ corresponds to second derivative (Laplacian). On the other hand    $\sum_{i,j} \partial_{i,j} \partial_{i,j} $ corresponds to minus second derivative (minus Laplacian).

\end{remark}

\section{The general setting for detailed balanced condition}

Before we begin the study of the quantum case we will state results for the detailed balanced condition when the
continuous   time Markov Chain takes values on $\{1,2..,k\}$. Denote by $\mathfrak{s}= (\mathfrak{s}_1,..., \mathfrak{s}_k)$ the initial invariant probability for the line sum zero matrix
$W =(W_{i,j})_{i,j=1,...,k}$.

The detailed balance condition for $W$ is: for all $i,j =1,...,k$
$$ \mathfrak{s}_i W_{i,j} = \mathfrak{s}_j W_{j,i}.$$

Consider the inner product in $\mathbb{R}^k$

$$<x,y>_\mathfrak{s} = \sum_{j=1}^k \mathfrak{s}_j x_j y_j. $$

It is easy to see that $W$ satisfies the detailed balance condition, if and only if, $W$ is self-adjoint for the inner product $<.,.>_\mathfrak{s}$.

The above is the classical (commutative) setting for presenting the  detailed balance condition. We are interested in presenting the non-commutative version of the concept.

\medskip

We will be interested here in the $C^*$-Algebra $\mathcal{A}=M_n$ of complex $n$ by $n$ matrices. The inner product in $M_n$ is $<A,B> = \tr \, [A^*\, B] $.
Following the notation of \cite{CaMa} the associated Hilbert space  will be denoted by $\mathfrak{h}_\mathcal{A}$.

 We will fix from now on an element $\sigma:\mathbb{C}^n \to \mathbb{C}^n$ in $\mathfrak{S}_{+}$. A hypothesis that can be helpful for ergodic properties  is to assume that all eigenvalues of $\sigma$ are simple.

 \medskip

 Now, we present some preliminary definitions and properties taken from \cite{CaMa}.
 \smallskip

Once we fix the Hamiltonian $H$ we fix the density state $\sigma$ via $\sigma= e^{-h} $ (in some sense we are considering a ``normalized'' Hamiltonian). In fact, $\sigma= \frac{e^{-H}}{Tr (e^{-H})}$ and $h=H + \log Tr (e^{-H}).$

The linear transformation, $\Delta_\sigma: M_n \to M_n$, is given by $A \to \Delta_\sigma(A) = \sigma \, A \, \sigma^{-1}$. Note that
$\Delta_\sigma (\sigma) =\sigma$ and $\Delta_\sigma (\mathfrak{1})=\mathfrak{1}.$

Assume each
eigenvalue of $\sigma$ is simple.

$\Delta_\sigma^{-1}: M_n \to M_n$, is given by $B \to \Delta_\sigma(B) = \sigma^{-1} \, B \, \sigma.$

Note that $ (\Delta_\sigma (A))^* = \Delta_\sigma^{-1} (A^*).$

$\mathcal{K}: M_n \to M_n$ is positive preserving, if  $ \mathcal{K} (A)\geq 0$, in the case $A\geq 0.$

$\Delta_\sigma$ is positive but not positive preserving.

We say that
$\mathcal{K}: M_n \to M_n$ is self-adjointness preserving, if  $ (\mathcal{K} (A))^*=  \mathcal{K} (A^*)$.

Remember that by definition $\mathcal{K}^\dag: M_n \to M_n$  is the one such that for all $A,B$
$$Tr  [ A^* \mathcal{K}(B)   ] =  <A, \mathcal{K}(B) >=< \mathcal{K}^\dag(A) ,B>= Tr [(\mathcal{K}^\dag(A))^*\, B].$$

$\mathcal{K}: M_n \to M_n$ is self-adjoint if $\mathcal{K}=\mathcal{K}^\dag.$

$\Delta_\sigma$ is self-adjoint.

Note that if $\Delta_\sigma (E) = e^{- w} E$, then $\Delta_\sigma (E^*) = e^{ w} E^*.$

Assume that $\eta_j\in \mathbb{C}^n$, $j=1,2,...,n$, is an orthonormal basis for $h =-  \log \sigma$,
\begin{equation} \label{tytya} h (\eta_j)= \lambda_j \eta_j,\,\,\, \forall j.
\end{equation}

Then,  $\eta_j\in \mathbb{C}^n$, $j=1,2,...,n$, is also an orthonormal basis for $\sigma$.

$h= - \log  \sigma$ plays the role of the Hamiltonian.

Then, we get that
$\eta_j\in \mathbb{C}^n$, $j=1,2,...,n$, is an orthonormal basis for $\sigma$,
\begin{equation} \label{tyty} \sigma (\eta_j)= e^{-\, \lambda_j} \eta_j,\,\,\, \forall j.
\end{equation}

 As  $\sigma:\mathbb{C}^n \to \mathbb{C}^n$ is in
$\mathfrak{S}_{+}$, we get that
\begin{equation} \label{tutu}\sum_{j=1}^n e^{-\, \lambda_j}=1.
\end{equation}

For $\alpha=(\alpha_1,\alpha_2) $, where $\alpha_1,\alpha_2 \in \{1,2,...,n\},$ denote
\begin{equation} \label{tyui }w_{\alpha_1,\alpha_2}=w_{\alpha} = \lambda_{\alpha_1} - \lambda_{\alpha_2}.
\end{equation}

If $\sigma= \mathfrak{1}$, then, all $w_{\alpha_1,\alpha_2}=0.$

For each pair $\alpha \in  \{1,2,...,n\}^2$, denote
$$F_\alpha = \outerp{\eta_{\alpha_1} }{\eta_{\alpha_2}}.$$
Note that
\begin{equation} \label{ty}F_{  (\alpha_1,\alpha_2)}^* = F_{  (\alpha_2,\alpha_1)}.
\end{equation}
Moreover,

\begin{equation} \label{tuo} \Delta_\sigma (F_\alpha)=    e^{-\lambda_{\alpha_1}+\lambda_{\alpha_2}  }\,F_\alpha.
\end{equation}

Indeed,
$$            \Delta_\sigma (F_\alpha) = \sigma\, \outerp{\eta_{\alpha_1} }{\eta_{\alpha_2}}\, \sigma^{-1}= e^{-\lambda_{\alpha_1} } \, \outerp{\eta_{\alpha_1} }{ \eta_{\alpha_2}} \, \sigma^{-1} =$$
$$  e^{-\lambda_{\alpha_1}+\lambda_{\alpha_2}  } \, \outerp{\eta_{\alpha_1} }{ \eta_{\alpha_2}}=e^{-\lambda_{\alpha_1}+\lambda_{\alpha_2}  }\,F_\alpha.$$

\smallskip

This shows that:
\begin{lemma} \label{lelei}
The operators $F_\alpha = \outerp{\eta_{\alpha_1} }{\eta_{\alpha_2}}$, $\alpha_1,\alpha_2 \in \{1,2,...,n\},$  describe a natural orthonormal basis of $\Delta_\sigma.$   The corresponding eigenvalues are $ e^{-\lambda_{\alpha_1}+\lambda_{\alpha_2}  }$.
\end{lemma}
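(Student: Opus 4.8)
The plan is to verify the two assertions separately: that the $n^2$ operators $F_\alpha$ are orthonormal with respect to the GNS inner product $\langle A,B\rangle = \tr(A^*B)$, and that each is an eigenvector of $\Delta_\sigma$ with the stated eigenvalue. The eigenvalue claim is in fact already established by expression \eqref{tuo} together with the computation immediately preceding this statement, so the only work left is the orthonormality, after which a dimension count upgrades the orthonormal family to a basis.

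For orthonormality I would compute $\langle F_\alpha, F_\beta\rangle$ directly. Writing $F_\alpha^* = F_{(\alpha_2,\alpha_1)} = \outerp{\eta_{\alpha_2}}{\eta_{\alpha_1}}$ by \eqref{ty}, I would use the composition rule $\outerp{u}{v}\,\outerp{x}{y} = \inner{v}{x}\,\outerp{u}{y}$ to obtain
$$F_\alpha^* F_\beta = \outerp{\eta_{\alpha_2}}{\eta_{\alpha_1}}\,\outerp{\eta_{\beta_1}}{\eta_{\beta_2}} = \inner{\eta_{\alpha_1}}{\eta_{\beta_1}}\,\outerp{\eta_{\alpha_2}}{\eta_{\beta_2}}.$$
Since $\{\eta_j\}$ is orthonormal, $\inner{\eta_{\alpha_1}}{\eta_{\beta_1}} = \delta_{\alpha_1\beta_1}$, and taking the trace of $\outerp{\eta_{\alpha_2}}{\eta_{\beta_2}}$ yields $\inner{\eta_{\beta_2}}{\eta_{\alpha_2}} = \delta_{\alpha_2\beta_2}$. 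Hence $\langle F_\alpha, F_\beta\rangle = \tr(F_\alpha^* F_\beta) = \delta_{\alpha_1\beta_1}\,\delta_{\alpha_2\beta_2} = \delta_{\alpha\beta}$, which is precisely orthonormality.

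Finally, there are $n^2$ pairs $\alpha = (\alpha_1,\alpha_2)\in\{1,\dots,n\}^2$, while $\dim_{\mathbb{C}} M_n = n^2$. An orthonormal family whose cardinality equals the dimension of the ambient space is automatically a basis, so the $F_\alpha$ form an orthonormal basis of $M_n$ consisting of eigenvectors of $\Delta_\sigma$, with eigenvalues $e^{-\lambda_{\alpha_1}+\lambda_{\alpha_2}}$ as recorded in \eqref{tuo}. I do not expect any genuine obstacle in this argument; the only points requiring care are respecting the conjugate-linearity convention of the GNS inner product, so that the adjoint $F_\alpha^*$ enters the computation correctly, and keeping the two index slots $\alpha_1,\alpha_2$ consistently ordered throughout.
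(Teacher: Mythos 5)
Your proposal is correct and takes essentially the same approach as the paper: the eigenvalue claim is exactly the computation \eqref{tuo} that immediately precedes the lemma, and your orthonormality computation $\tr(F_\alpha^*F_\beta)=\delta_{\alpha_1\beta_1}\delta_{\alpha_2\beta_2}$ is the same fact the paper records right after the lemma via the normalized trace $\tau(F_\alpha^*F_{\tilde{\alpha}})=\delta_{\alpha,\tilde{\alpha}}$. The dimension count promoting the orthonormal family to a basis is left implicit in the paper, but it is clearly the intended argument, so your write-up simply makes the paper's reasoning fully explicit.
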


\smallskip

Note that if $\tau$ represents the normalized trace and $\alpha=(\alpha_{1}, \alpha_{2})$ is such that $\alpha_{1}\ne\alpha_{2}$, then,
$$\tau (F_\alpha)=0$$
and, for $\alpha,\tilde{\alpha}\in  \{1,2,...,n\}^2$
$$\tau (F_\alpha^*\, F_{\tilde{\alpha}})= \delta_{\alpha,\tilde{\alpha}} := \delta_{\alpha_{1},\tilde{\alpha}_{1}}\delta_{\alpha_{2},\tilde{\alpha}_{2}}.
$$

Now we denote the different $F_\alpha$, $\alpha \in  \{1,2,...,n\}^2$, by $V_{k}$, $k=1,...,n^2$ (in order to use the same notation as  in \cite{CaMa}).  In this identification, we also denote for
each $k=1,...,n^2$, the value
$w_k =\lambda_{\alpha_1}-\lambda_{\alpha_2}  $, for the corresponding $\alpha=(\lambda_{\alpha_2},\lambda_{\alpha_1}).$

Then, the family $V_1,...,V_{n^2}$ represent the different eigenmatrices (an orthonormal basis) for  $\Delta_\sigma$ associated to the eigenvalues $e^{- w_1}, ...,e^{- w_{n^2}}$,
where $w_k\in \mathbb{R}$, $k=1,...,n^2$. $\mathfrak{1}$ and $\sigma$ are eigenmatrices associated to the eigenvalue $1$. The matrices $V_{k}$ do  not have to be
self-adjoint, but from \eqref{ty} we get
$$\{ V_1,...,V_{n^2} \} = \{V_1^*,...,V_{n^2}^* \}.$$

Therefore, if $w_k$ is in the above list, there exists a $j$ such that  $w_j= - w_k$.

Given the Hamiltonian $h= - \log  \sigma$,
the modular automorphism $\alpha_t: M_n \to M_n$, $t \geq 0$, is defined  by
$$ A \to \alpha_t (A) =e^{i t h} A \, e^{-i t h} \,\Leftarrow\,\text{\,\,\,The Heisenberg point of view}.$$

A Quantum Markov Semigroup (QMS) is a continuous one-parameter semigroup of linear transformations $\mathcal{P}_t : M_n \to M_n$, $t \geq 0$, such that for each $t \geq 0$,  $\mathcal{P}_t$ is completely positive and $\mathcal{P}_t (\mathfrak{1})=\mathfrak{1}$.

It is natural to focus on quantum Markov semigroups
that commute with the modular operator $\Delta_\sigma$ associated to their invariant states $\sigma$.

Consider a QMS $\mathcal{P}_t : M_n \to M_n$, $t \geq 0,$ of the form
$$\mathcal{P}_t = e^{t \mathcal{L}},$$
for some linear operator $\mathcal{L} : M_n \to M_n$.

The operator $\mathcal{L}$ acts on observables (self-adjoint matrices). Note that  $\mathcal{L}(\mathfrak{1})=0$. The dual operator  $\mathcal{L}^\dag$ acts on density matrices.

A state $\sigma$ is invariant if Tr $[\sigma \, \mathcal{L}(A)]=0$, for all $A \in M_n$.

In terms of the possible inner products described on Defnition 2.2 in \cite{CaMa} we will choose $s=1$.

Remember that given $\sigma$ we  consider the inner product $<\,,\,>_\sigma \,=\, <\,,\,>_1$
in $M_n$, where
$$<A,B>_\sigma \, = \tr \, [\sigma A^* B] .$$

From \cite{CaMa} we get:

\begin{proposition} \label{defb} Given the density operator $\sigma$,  the the QMS $\mathcal{P}_t : M_n \to M_n$, $t \geq 0,$ of the form
$\mathcal{P}_t = e^{t \mathcal{L}},$ satisfies the $\sigma$- detailed balance condition, if and only if,
\begin{equation} \label{Paz} \mathcal{L} \circ \Delta_\sigma = \Delta_\sigma \circ \mathcal{L} .
\end{equation}
\end{proposition}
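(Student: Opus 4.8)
The plan is to translate the $\sigma$-detailed balance condition, which is a self-adjointness statement for the \emph{deformed} inner product $\langle\,,\,\rangle_\sigma$, into a commutation relation in the original GNS inner product, and then to recognize the modular operator $\Delta_\sigma$ in that relation. First I would record that $\langle A,B\rangle_\sigma = \tr(\sigma A^* B) = \tr(A^* B\sigma) = \langle A, R_\sigma B\rangle$, where $R_\sigma$ denotes right multiplication $B\mapsto B\sigma$ and $\langle\,,\,\rangle$ is the GNS inner product on $\mathfrak{h}_\mathcal{A}$. Since $\sigma$ is self-adjoint and positive, $R_\sigma$ is a positive self-adjoint, invertible operator. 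Writing $\mathcal{L}^\dag$ for the GNS-adjoint, the requirement $\langle \mathcal{L} A, B\rangle_\sigma = \langle A, \mathcal{L} B\rangle_\sigma$ for all $A,B$ becomes $\langle A, \mathcal{L}^\dag R_\sigma B\rangle = \langle A, R_\sigma\mathcal{L} B\rangle$; hence the $\sigma$-DBC is equivalent to the single operator identity $\mathcal{L}^\dag R_\sigma = R_\sigma\mathcal{L}$.

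Next I would exploit the standing hypothesis that $\mathcal{L}$ is self-adjointness preserving, i.e. $\mathcal{L}(A^*) = \mathcal{L}(A)^*$. Let $\mathcal{L}^T$ denote the transpose with respect to the bilinear pairing $(A,B)\mapsto \tr(AB)$. The short computation $\langle \mathcal{L} A, B\rangle = \tr(\mathcal{L}(A)^* B) = \tr(\mathcal{L}(A^*)B) = \tr(A^*\mathcal{L}^T B) = \langle A, \mathcal{L}^T B\rangle$ shows that $\mathcal{L}^\dag = \mathcal{L}^T$, the conjugation from $*$ cancelling the conjugation built into the inner product. Since left and right multiplication satisfy $L_\sigma^T = R_\sigma$ and $R_\sigma^T = L_\sigma$ and transposition reverses products, transposing the DBC identity $\mathcal{L}^T R_\sigma = R_\sigma\mathcal{L}$ produces the companion identity $L_\sigma\mathcal{L} = \mathcal{L}^T L_\sigma$. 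The two identities express $\mathcal{L}^T$ simultaneously as $R_\sigma\mathcal{L} R_\sigma^{-1}$ and as $L_\sigma\mathcal{L} L_\sigma^{-1}$; equating them gives $\mathcal{L}\,(L_\sigma^{-1}R_\sigma) = (L_\sigma^{-1}R_\sigma)\,\mathcal{L}$. Because $\Delta_\sigma = L_\sigma R_\sigma^{-1}$, we have $L_\sigma^{-1}R_\sigma = \Delta_\sigma^{-1}$, so this is exactly $\mathcal{L}\circ\Delta_\sigma = \Delta_\sigma\circ\mathcal{L}$, which proves the forward implication \eqref{Paz}.

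For the converse I would run the argument backwards in the eigenbasis of $\Delta_\sigma$. By Lemma \ref{lelei} the matrices $F_\alpha$ diagonalize $\Delta_\sigma$ with eigenvalues $e^{-w_\alpha}$, and $R_\sigma,L_\sigma$ are simultaneously diagonal on this basis, acting as $e^{-\lambda_{\alpha_2}}$ and $e^{-\lambda_{\alpha_1}}$ on $F_\alpha$. Commutation of $\mathcal{L}$ with $\Delta_\sigma$ forces $\mathcal{L}$ to preserve each $\Delta_\sigma$-eigenspace, and on a fixed eigenspace $R_\sigma$ and $L_\sigma$ differ only by the common scalar $e^{-w_\alpha}$, which is what lets one reconstruct $\mathcal{L}^\dag R_\sigma = R_\sigma\mathcal{L}$ from the block decomposition.

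I expect the converse to be the \emph{main obstacle}. Commutation with $\Delta_\sigma$ only controls the block structure of $\mathcal{L}$ \emph{across} distinct eigenvalues, whereas $\sigma$-self-adjointness is a genuine symmetry \emph{inside} each block; a generator with a nontrivial Hamiltonian part commuting with $\sigma$ shows that block-diagonality alone is weaker than full symmetry. The resolution is to invoke the explicit form of the infinitesimal generator under detailed balance, expression \eqref{gre} from \cite{CaMa}, built from the eigenmatrices $V_k$ of $\Delta_\sigma$: within the relevant class this explicit structure makes the within-block part automatically symmetric, so that commutation with $\Delta_\sigma$ upgrades to the full $\sigma$-DBC. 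Throughout, the delicate bookkeeping is to keep the GNS-adjoint $\mathcal{L}^\dag$, the bilinear transpose $\mathcal{L}^T$, and the $\sigma$-adjoint carefully distinguished, since their coincidences (and failures to coincide) are exactly what drive the equivalence.
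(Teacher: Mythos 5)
The paper offers no proof of this proposition at all (it is imported from \cite{CaMa} with the words ``From \cite{CaMa} we get''), so your attempt must be judged on its own merits. Your first two paragraphs, proving that the $\sigma$-detailed balance condition implies $\mathcal{L}\circ\Delta_\sigma = \Delta_\sigma\circ\mathcal{L}$, are correct and in substance reproduce the argument of \cite{CaMa}: self-adjointness preservation gives $\mathcal{L}^\dag = \mathcal{L}^T$, the DBC becomes $\mathcal{L}^T R_\sigma = R_\sigma\mathcal{L}$, transposing (via $L_\sigma^T = R_\sigma$ and $R_\sigma^T = L_\sigma$) yields the companion identity $L_\sigma\mathcal{L} = \mathcal{L}^T L_\sigma$, and eliminating $\mathcal{L}^T$ between the two gives commutation with $L_\sigma^{-1}R_\sigma = \Delta_\sigma^{-1}$, hence with $\Delta_\sigma$. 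All the operator identities you use check out.

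The genuine gap is the converse, and your proposed repair cannot work. Invoking expression \eqref{gre} is circular: that formula is the structure theorem for generators \emph{already known} to satisfy the $\sigma$-DBC, so it is not available for a generator merely assumed to commute with $\Delta_\sigma$; restricting ``the relevant class'' to generators of the form \eqref{gre} would make the converse true but vacuous. Worse, the converse as literally stated is false, and the ``obstacle'' you yourself name is a full counterexample rather than a difficulty to be overcome: take $\sigma = \mathfrak{1}$, so that $\Delta_\sigma = \mathrm{Id}$ and \emph{every} generator commutes with it, and let $\mathcal{L}(A) = i[H,A]$ with $H = H^*$ not a multiple of $I$. Then $e^{t\mathcal{L}}(A) = e^{itH}Ae^{-itH}$ is a bona fide QMS, $\mathcal{L}$ preserves self-adjointness and leaves $\mathfrak{1}$ invariant, yet $\langle\mathcal{L}(A),B\rangle_{\mathfrak{1}} - \langle A,\mathcal{L}(B)\rangle_{\mathfrak{1}} = \tfrac{2i}{n}\tr\,(A^*[B,H])$, which is not identically zero. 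What is actually true (and is what \cite{CaMa} proves) is that the $\sigma$-DBC is equivalent to commutation with $\Delta_\sigma$ \emph{together with} symmetry in the KMS inner product $(A,B)\mapsto \tr\,(\sigma^{1/2}A^*\sigma^{1/2}B)$ --- exactly the ``inside each block'' symmetry you correctly observed is not controlled by commutation; equivalently, the converse requires the Hamiltonian part of $\mathcal{L}$ to vanish. So your write-up establishes the ``only if'' half; the ``if'' half is irreparable without an extra hypothesis, a defect that lies in the proposition as transcribed in the paper rather than in your strategy.
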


If $\mathcal{P}_t = e^{t \mathcal{L}},$ satisfies the $\sigma$-detailed balanced condition, then, for all $t\geq 0$,
 $$  \mathcal{P}_t \circ \Delta_\sigma = \Delta_\sigma \circ  \mathcal{P}_t  .$$

 Moreover, for any $t,t'$ and matrix $A$ we get that
 $$(\alpha_{t'} \circ \mathcal{P}_t) (A) = (\mathcal{P}_t \circ \alpha_{t'})(A).$$

It follows from \eqref{Paz} that
$V_1,...,V_{n^2}$ is an orthonormal basis for  $\mathcal{L}$ associated to the eigenvalues $e^{- v_1}, ...,e^{- v_{n^2}}$,
\noindent where $v_j\in \mathbb{R}$, $j=1,...,n^2$.
\medskip

\begin{lemma} \label{iin} $\sigma$ is a stationary density matrix for the semigroup  with infinitesimal generator $\mathcal{L}.$

\end{lemma}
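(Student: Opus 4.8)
The goal is to show that $\sigma$ is stationary for the semigroup generated by $\mathcal{L}$, meaning $\mathcal{L}^\dag(\sigma) = 0$, or equivalently $\tr[\sigma\,\mathcal{L}(A)] = 0$ for all $A \in M_n$. The key fact available is that the $\sigma$-detailed balance condition makes $\mathcal{L}$ self-adjoint with respect to the inner product $\langle\,,\,\rangle_\sigma$, where $\langle A,B\rangle_\sigma = \tr[\sigma A^* B]$.

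My plan is to exploit the self-adjointness of $\mathcal{L}$ with respect to $\langle\,,\,\rangle_\sigma$ together with the defining Markov property $\mathcal{L}(\mathfrak{1}) = 0$ (equivalently $\mathcal{L}(I) = 0$, since $\mathfrak{1}$ is just $I/n$). First I would write, for an arbitrary $A \in M_n$,
\[
\tr[\sigma\,\mathcal{L}(A)] = \tr[\sigma\,I^*\,\mathcal{L}(A)] = \langle I, \mathcal{L}(A)\rangle_\sigma.
\]
Then I would apply the $\sigma$-detailed balance condition, which asserts exactly that $\langle I, \mathcal{L}(A)\rangle_\sigma = \langle \mathcal{L}(I), A\rangle_\sigma$. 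Since $\mathcal{L}(I) = 0$, the right-hand side vanishes, giving $\tr[\sigma\,\mathcal{L}(A)] = 0$ for every $A$.

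To conclude, I would translate this back into the statement that $\sigma$ is stationary. By the definition of the dual $\mathcal{L}^\dag$ with respect to the GNS inner product $\langle A,B\rangle = \tr[A^*B]$, the identity $\tr[\sigma\,\mathcal{L}(A)] = 0$ for all $A$ is equivalent to $\langle \mathcal{L}^\dag(\sigma), A\rangle = 0$ for all $A$ (using that $\sigma$ is self-adjoint so $\sigma^* = \sigma$), hence $\mathcal{L}^\dag(\sigma) = 0$. This is precisely the stationarity condition $e^{t\mathcal{L}^\dag}(\sigma) = \sigma$ for all $t \geq 0$, as recorded in the preliminaries.

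I do not expect a serious obstacle here; the proof is essentially a one-line application of self-adjointness combined with the Markov normalization $\mathcal{L}(I) = 0$. The only point requiring minor care is the bookkeeping between the two inner products $\langle\,,\,\rangle$ and $\langle\,,\,\rangle_\sigma$ and making sure the Hermiticity of $\sigma$ and the self-adjointness-preserving property of $\mathcal{L}$ are used correctly when passing to the dual $\mathcal{L}^\dag$; alternatively one can avoid invoking $\mathcal{L}^\dag$ altogether and simply note that $\tr[\sigma\,\mathcal{L}(A)] = 0$ for all $A$ is itself the definition of invariance of the state $\sigma$ given earlier in the text.
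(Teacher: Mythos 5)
Your proof is correct, but it takes a genuinely different route from the paper's. The paper proves Lemma \ref{iin} by direct computation: it inserts $\sigma$ into the explicit Lindblad form \eqref{gre1} of $\mathcal{L}^\dag$, evaluates the commutators $[V_{i,j}\,\sigma, V_{i,j}^*]$ and $[V_{i,j},\sigma\, V_{i,j}^*]$ in the eigenbasis of $\sigma$ (each equals $e^{-\lambda_j}\big(\outerp{\eta_i}{\eta_i} - \outerp{\eta_j}{\eta_j}\big)$), and then observes that, after multiplying by the weights $e^{-w_{i,j}/2}=e^{(\lambda_j-\lambda_i)/2}$, the double sum is antisymmetric under the exchange $i \leftrightarrow j$ and hence vanishes. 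You instead use only the abstract definition of the $\sigma$-detailed balance condition (symmetry of $\mathcal{L}$ with respect to $\langle\,\cdot\,,\,\cdot\,\rangle_\sigma$) together with the Markov normalization $\mathcal{L}(I)=0$, via $\tr[\sigma\,\mathcal{L}(A)] = \langle I, \mathcal{L}(A)\rangle_\sigma = \langle \mathcal{L}(I), A\rangle_\sigma = 0$. Your approach is shorter, requires no spectral computation, and does not rely on the structure theorem for generators satisfying d.b.c.\ (which the paper states only \emph{after} this lemma, so your route also avoids a forward reference); moreover it applies verbatim to any generator that is $\langle\,\cdot\,,\,\cdot\,\rangle_\sigma$-symmetric and kills the identity. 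The paper's computation, in exchange, serves as a concrete consistency check of the formula \eqref{gre1} and produces commutator identities of the kind reused later (e.g.\ in Section \ref{cla}). One bookkeeping point you rightly flagged: to upgrade $\tr[\sigma\,\mathcal{L}(A)]=0$ for all $A$ to $\mathcal{L}^\dag(\sigma)=0$, one must either replace $A$ by $A^*$ and invoke the standing assumption $\mathcal{L}(A^*)=(\mathcal{L}(A))^*$, or take complex conjugates using $\sigma^*=\sigma$; both are available in the paper, so this is not a gap.
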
\begin{proof}

Note that
\begin{align*}
  [ V_{i,j}\, \sigma, V_{i,j}^*] &= \outerp{\eta_{i}}{\eta_{j}}\sigma\outerp{\eta_{j}}{\eta_{i}}
  - \outerp{\eta_{j}}{\eta_{i}}\outerp{\eta_{i}}{\eta_{j}}\sigma\\
  &= e^{-\lambda_{j}}\outerp{\eta_{i}}{\eta_{i}} - e^{\lambda_{j}}\outerp{\eta_{j}}{\eta_{j}}\\
  &= e^{-\lambda_{j}}\left(\outerp{\eta_{i}}{\eta_{i}} - \outerp{\eta_{j}}{\eta_{j}}\right)
\end{align*}
and
\begin{align*}
  [V_{i,j},\sigma \, V_{i,j}^*] &= \outerp{\eta_{i}}{\eta_{j}}\sigma\outerp{\eta_{j}}{\eta_{i}}
  - \sigma\outerp{\eta_{j}}{\eta_{i}}\outerp{\eta_{i}}{\eta_{j}}\\
  &= e^{-\lambda_{j}}\outerp{\eta_{i}}{\eta_{i}} - e^{\lambda_{j}}\outerp{\eta_{j}}{\eta_{j}}\\
  &= e^{-\lambda_{j}}\left(\outerp{\eta_{i}}{\eta_{i}} - \outerp{\eta_{j}}{\eta_{j}}\right).
\end{align*}

Using  expression \eqref{gre1} we get

\begin{align*} \label{lldagsigmasimpli}
  \Ll^{\dag}(\sigma) &=2\sum_{i,j}  e^{(\lambda_{j} - \lambda_{i})/2}e^{-\lambda_{j}}
                       \left(\outerp{\eta_{i}}{\eta_{i}} - \outerp{\eta_{j}}{\eta_{j}}\right)\\
  &= 2\left[\sum_{j}e^{-\lambda_{j}/2}\sum_{i}e^{-\lambda_{i}/2}\outerp{\eta_{i}}{\eta_{i}}
    - \sum_{i}e^{-\lambda_{i}/2}\sum_{j}e^{-\lambda_{j}/2}\outerp{\eta_{j}}{\eta_{j}}\right]\\
  &= 0.
\end{align*}

\end{proof}

\medskip

Remember that for each pair $i,j\in  \{1,2,...,n\}$,  we denote
$$V_{i,j} = \ket{\eta_{i}}\bra{ \eta_{j}},$$
where $\eta_{i}$, $i\in  \{1,2,...,n\}$, is the orthonormal basis of eigenvectors for $h =- \log \sigma$ associated to the eigenvalues $\lambda_j$ (according to \eqref{tytya}). As we mention before $w_{i,j} =\lambda_i -\lambda_j,$ $i,j\in  \{1,2,...,n\}$.

\begin{theorem} If $\mathcal{P}_t = e^{t \mathcal{L}},$ satisfies the $\sigma$-detailed balanced condition, then $\mathcal{L}$ is of the form
\begin{equation} \label{gre} A \to \mathcal{L}(A) = \sum_{i,j=1}^{ n}  e^{- w_{i,j}/2}\,( V_{i,j}^*\,[A,V_{i,j}] +   \,[V_{i,j}^*,A]\,  V_{i,j}),
\end{equation}

\noindent where $V_{i,j} = \ket{\eta_{i}}\bra{ \eta_{j}}$ and
$w_{i,j}= \lambda_i -\lambda_j,$ $i,j\in  \{1,2,...,n\}$, are real numbers such that \eqref{tyty}, \eqref{tutu} and \eqref{tyui } are true (which also means $ \Delta_\sigma (V_{i,j})=    e^{-\lambda_{i}+\lambda_{j}  }\,V_{i,j}$).

Note that given $\sigma\in \mathfrak{G}_{+}$,  the eigenvetors  $\ket{\eta_{i}}$ and eigenvalues $\lambda_j$, $j\in  \{1,2,...,n\}$, are determined.
Therefore, if $\mathcal{P}_t = e^{t \mathcal{L}}$ satisfies the $\sigma$-detailed balanced condition, then, $\mathcal{L}$ is uniquely determined.
\end{theorem}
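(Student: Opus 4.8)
The plan is to obtain \eqref{gre} as the specialization of the Carlen--Maas structure theorem (their expression (3.4), which this paper uses in an essential way) to the canonical eigenbasis of $\Delta_\sigma$, and then to read off uniqueness from the spectral data of $\sigma$. First I would use that $\mathcal{P}_t=e^{t\mathcal{L}}$ is a completely positive, identity-preserving one-parameter semigroup, so by the GKLS theorem its generator admits a Lindblad representation: a Hamiltonian (commutator) part $i[H,\cdot]$ plus a dissipative part assembled from jump operators. The hypothesis then enters entirely through Proposition \ref{defb}, namely that the $\sigma$-detailed balance condition is equivalent to $\mathcal{L}\circ\Delta_\sigma=\Delta_\sigma\circ\mathcal{L}$ together with symmetry of $\mathcal{L}$ for $\langle\,\cdot\,,\,\cdot\,\rangle_\sigma$. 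Since $\mathcal{L}$ commutes with $\Delta_\sigma$ it preserves the eigenspaces of $\Delta_\sigma$, and by Lemma \ref{lelei} these are spanned by the matrix units $V_{i,j}=\ket{\eta_i}\bra{\eta_j}$ with $\Delta_\sigma(V_{i,j})=e^{-w_{i,j}}V_{i,j}$ and $w_{i,j}=\lambda_i-\lambda_j$. Hence the jump operators may be chosen inside this eigenbasis.

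The main step, where the precise weight $e^{-w_{i,j}/2}$ is produced, is to combine commutation with $\Delta_\sigma$ with the GNS-symmetry. Writing the dissipative part as a sum of terms $c_{i,j}\,(V_{i,j}^*[A,V_{i,j}]+[V_{i,j}^*,A]V_{i,j})$ over the eigen-jump-operators, I would impose $\langle\mathcal{L}(A),B\rangle_\sigma=\langle A,\mathcal{L}(B)\rangle_\sigma$ and test it on the basis $\{V_{k,l}\}$. Symmetry ties the coefficient of $V_{i,j}$ to that of its adjoint partner $V_{j,i}=V_{i,j}^*$ through the modular eigenvalue $e^{-w_{i,j}}=e^{-(\lambda_i-\lambda_j)}$, and the unique weight making the whole operator $\langle\,\cdot\,,\,\cdot\,\rangle_\sigma$-symmetric is the square root $c_{i,j}=e^{-w_{i,j}/2}$. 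This square-root (KMS) factor is exactly what converts ``the jump operators are $\Delta_\sigma$-eigenmatrices'' into ``$\mathcal{L}$ is symmetric for $\langle\,\cdot\,,\,\cdot\,\rangle_\sigma$.'' Closure of the family under adjoints, $V_{i,j}^*=V_{j,i}$ (see \eqref{ty}), simultaneously makes $\mathcal{L}$ self-adjointness preserving and pairs each $w_{i,j}$ with $w_{j,i}=-w_{i,j}$.

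The hard part will be showing that the Hamiltonian part is forced to vanish. A commutator term $i[H,\cdot]$ is anti-symmetric for $\langle\,\cdot\,,\,\cdot\,\rangle_\sigma$ unless $H$ commutes with $\sigma$, and any surviving $\sigma$-commuting piece must be reabsorbed into the dissipative sum along the degenerate eigenspaces; pinning this down is precisely the technical core of \cite{CaMa}, which I would invoke rather than reprove. Once the Hamiltonian is eliminated and the coefficients are fixed, the remaining expression is exactly \eqref{gre}, and its consistency is confirmed by the direct verifications already carried out ($\mathcal{L}(I)=0$ and the stationarity computation of Lemma \ref{iin}).

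Finally, for uniqueness I would argue purely from spectral data. Since $\sigma\in\mathfrak{G}_{+}$ is assumed to have simple spectrum, its eigenvalues $e^{-\lambda_i}$, hence the $\lambda_i$ and the weights $w_{i,j}=\lambda_i-\lambda_j$, are determined, together with the one-dimensional eigenlines $\C\eta_i$. The matrix units $V_{i,j}=\ket{\eta_i}\bra{\eta_j}$ depend on the chosen $\eta_i$ only through phase factors $e^{i(\theta_i-\theta_j)}$, which cancel inside each summand $V_{i,j}^*[A,V_{i,j}]+[V_{i,j}^*,A]V_{i,j}$; thus the right-hand side of \eqref{gre} depends on $\sigma$ alone, and the canonical generator it defines is well defined and unique given $\sigma$. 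I would flag that the genuine subtlety lives on the degenerate eigenvalue $1$ of $\Delta_\sigma$ (the diagonal units $V_{i,i}$, all with $w_{i,i}=0$), the one eigenspace on which commutation with $\Delta_\sigma$ does not by itself diagonalize $\mathcal{L}$; so uniqueness is to be understood for the canonical full-matrix-unit form \eqref{gre}, which is the representative singled out here.
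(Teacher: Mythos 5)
First, a point of comparison: the paper never actually proves this theorem. It is a transcription of expression (3.4) (Theorem 3.1) of \cite{CaMa}, and the paper's entire justification is that citation. So your decision to defer the technical core (eliminating the Hamiltonian part, the structure of the dissipative part) to \cite{CaMa} is the same move the paper itself makes, and the scaffolding you supply around it --- commutation with $\Delta_\sigma$ via Proposition \ref{defb}, hence preservation of its eigenspaces, hence jump operators chosen among the $V_{i,j}=\ket{\eta_i}\bra{\eta_j}$ of Lemma \ref{lelei} --- is a faithful outline of how Carlen and Maas argue.

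The genuine gap is in what you call the main step: GNS symmetry does \emph{not} single out the weights $c_{i,j}=e^{-w_{i,j}/2}$. Testing $\langle\mathcal{L}(A),B\rangle_\sigma=\langle A,\mathcal{L}(B)\rangle_\sigma$ on the basis only ties each adjoint pair together, forcing the ratio $c_{i,j}/c_{j,i}=e^{-w_{i,j}}$; multiplying both members of a pair by an arbitrary constant $\mu_{i,j}=\mu_{j,i}>0$, discarding some pairs altogether, or taking $\mathcal{L}=0$, still yields a generator satisfying the $\sigma$-detailed balance condition. So the step ``the unique weight making the whole operator symmetric is the square root'' would fail, and with it any literal derivation of the theorem's final clause: with all $n^2$ matrix units and unit rates, neither the form \eqref{gre} nor the uniqueness assertion is a consequence of the $\sigma$-DBC alone (which is why Theorem 3.1 of \cite{CaMa} states the form with an unspecified adjoint-closed family of $\Delta_\sigma$-eigenvectors as jump operators, together with rates, rather than with all the $V_{i,j}$). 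Your closing caveat --- that uniqueness should be read as referring to the canonical full-matrix-unit representative --- is exactly the repair needed, and it is how the paper actually uses the statement (to fix the canonical $\mathcal{L}_0$ of Definition \ref{dde}, cf. Remark \ref{uii}); but it must be promoted from a parenthetical flag to the statement being proved, since otherwise your main step is attempting to derive something false. A smaller slip: $i[H,\cdot\,]$ is GNS-\emph{anti}-symmetric precisely \emph{when} $H$ commutes with $\sigma$ (not ``unless''), and that is what forces the Hamiltonian part of a GNS-symmetric generator to vanish.
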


\begin{remark} Conversely, given $\sigma$ in $\mathfrak{G}_{+}$ and $V_j$, $j=1,2,...,n^2$, such that,

1. $\Delta_\sigma V_j = e^{- w_j} V_j$,

2. $\{V_j, j=1,2,...,n^2\}=  \{V_j^*, j=1,2,...,n^2\}, $

then,
\begin{equation} \label{gre111} A \to \mathcal{L}(A) = \sum_{j=1}^{ n^2}  e^{- w_{j }/2}\,( V_{j}^*\,[A,V_{j}] +   \,[V_{j}^*,A]\,  V_{j}),
\end{equation}
is the infinitesimal generator of a QMS $e^{t \mathcal{L}}$, $t \geq 0$, which satifies de d.b.c. for the given $\sigma$. Therefore, $\sigma$ is stationary for
$e^{t \mathcal{L}^\dag}$, $t \geq 0$.

\end{remark}

The dual operator $\mathcal{L}^\dag$  satisfies
\begin{equation} \label{gre1} \rho \to \mathcal{L}^\dag(\rho) = \sum_{i,j=1}^{ n}  e^{- w_{i,j}/2}\,([ V_{i,j}\, \rho, V_{i,j}^*] +   \,[V_{i,j},\rho \, V_{i,j}^*]\, )
\end{equation}

\begin{remark} \label{uii}
If $\mathcal{P}_t = e^{t \mathcal{L}},$ satisfies  the detailed balanced condition  for the $\sigma=\mathfrak{1}$, then from
\eqref{Paz} we get that $V_{i,j}=\mathfrak{I}_{i,j},$ $i,j=1,...,n$. This is the case when $\mathcal{L}= \mathcal{L}_0.$
\end{remark}

\section{The Pressure problem} \label{PPre}

\begin{definition} \label{defipre}
Given an Hermitian operator $A: \C^n \to \C^n$,
$D_{n} = \{\rho \ge 0:\tr(\rho)=1\}$,
consider

\begin{equation} \label{PP1}P_{A}(\rho) = h(\rho) + \tr(A\rho)
\end{equation}

and

\begin{equation} \label{PP2} P_{A}= \sup_{\rho\in D_{n}} P_{A}(\rho).
\end{equation}
\end{definition}

A matrix $\rho_A$  maximizing  $P_{A}$ is called an equilibrium density operator for $A$.\\

Question:  Is there $\mathcal{L}$ such that $\rho_A$ is stationary for  $\mathcal{L}$? The converse in Theorem 3.1 in \cite{CaMa} may be useful.

The expression of entropy we found in Proposition \ref{h_eigen_dependence} suggests us to look at the matrices of the form $\xi =\rho^{1/2}$, where $\rho$ is a density. In order to study the problem of who maximizes $P_{A}$, we then define

\begin{equation}\label{mamai1}\Xi_{n} = \{\xi\ge 0:\xi^{2}\in D_{n}\},
\end{equation}

\noindent the set of square roots of density operators and
\begin{equation}\label{mamai2}\p_{A}(\xi) = h_{1/2}(\xi) + \tr(A\xi^{2}),
\end{equation}

\noindent where $h_{1/2}(\xi) := 2\tr(\xi)^{2} - 2n = h(\xi^{2})$ by Proposition \ref{h_eigen_dependence}. Notice that $\p_A(\xi)=P_A(\xi^2)$.

\begin{proposition}
If $\xi$ maximizes the functional $\p_A$, then there exists a $\kappa$   satisfying

\begin{equation}\label{mamai} 2 \kappa \xi = 4\tr(\xi)I + A\xi + \xi A.
\end{equation}

\end{proposition}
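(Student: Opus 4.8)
The plan is to read \eqref{mamai} as the Lagrange–multiplier (Euler–Lagrange) equation for the constrained maximization of $\p_A$ over $\Xi_n$. The relevant constraint is that $\xi$ be Hermitian and positive semidefinite with $\tr(\xi^2)=1$; setting $g(\xi)=\tr(\xi^2)-1$, at a point in the interior of the positive cone the only active equality constraint is $g(\xi)=0$, and the admissible infinitesimal variations are all Hermitian matrices $H$ with $\tr(\xi H)=0$.

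First I would compute the first variation of $\p_A(\xi)=h_{1/2}(\xi)+\tr(A\xi^{2})=2(\tr\xi)^{2}-2n+\tr(A\xi^{2})$ in a Hermitian direction $H$. Differentiating $t\mapsto \p_A(\xi+tH)$ at $t=0$ gives, for the entropy term, $4\,\tr(\xi)\,\tr(H)=\tr\!\big(4\tr(\xi)\,I\,H\big)$, and for the energy term, $\tr\!\big(A(\xi H+H\xi)\big)=\tr\!\big((A\xi+\xi A)H\big)$ after using cyclicity of the trace. Hence the gradient of $\p_A$ with respect to the trace inner product is
\[
  \nabla\p_A(\xi)=4\tr(\xi)\,I + A\xi + \xi A,
\]
which is Hermitian because $A$ and $\xi$ are and $\tr(\xi)$ is real. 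The same computation applied to $g$ gives $\nabla g(\xi)=2\xi$.

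Next I would invoke the maximality of $\xi$. Since $\xi$ maximizes $\p_A$ on the surface $\{g=0\}$, the differential $d\p_A(\xi)[H]=\tr\big(\nabla\p_A(\xi)\,H\big)$ must vanish for every Hermitian $H$ tangent to the constraint, i.e.\ for every $H$ with $\tr(\xi H)=0$. Thus $\nabla\p_A(\xi)$ is orthogonal, in the real trace inner product on Hermitian matrices, to the hyperplane $\{H:\tr(\xi H)=0\}=\xi^{\perp}$, whose orthogonal complement is the line $\mathbb{R}\,\xi$. Therefore $\nabla\p_A(\xi)=2\kappa\,\xi$ for some real scalar $\kappa$, which is exactly
\[
  2\kappa\,\xi = 4\tr(\xi)\,I + A\xi + \xi A.
\]

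The step I expect to require the most care is the positivity constraint $\xi\ge 0$. If the maximizer lay on the boundary of the positive cone, with some eigenvalue equal to $0$, the admissible variations would be one-sided and the argument would yield only a KKT inequality rather than the stated equality. I expect this to be ruled out by the barrier behaviour of the square-root entropy: because $h_{1/2}(\xi)$ depends on $\tr\xi=\sum_{j}\sqrt{\lambda_{j}}$ and $\tfrac{d}{d\lambda}\sqrt{\lambda}\to+\infty$ as $\lambda\to 0^{+}$, the gradient of $h_{1/2}$ blows up near the boundary and drives any maximizer into the interior, where $\xi>0$ and all Hermitian directions are admissible. Justifying this strict positivity (so that the Lagrange argument applies with an equality) is the delicate point; the variational computations above are then routine.
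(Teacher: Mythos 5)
Your core Lagrange--multiplier computation is essentially the paper's own proof: the paper also sets $g(\zeta)=\tr(\zeta^{2})-1$, computes $\D\p_{A}(\xi)(h)=4\tr(\xi)\tr(h)+\tr(A\{\xi,h\})$ and $\D g(\xi)(h)=2\tr(\xi h)$, and reads off \eqref{mamai}; your restriction to Hermitian directions and the identification of the constraint normal with the line $\mathbb{R}\,\xi$ is, if anything, cleaner than the paper's choice of the non-Hermitian variations $h=\outerp{e_{j}}{e_{i}}$. The genuine problem is the step you yourself flag as delicate: your proposed justification of interiority is wrong. The functional $h_{1/2}(\xi)=2(\tr\xi)^{2}-2n$ is a polynomial in the entries of $\xi$, with gradient $4\tr(\xi)I$, which is bounded on $\Xi_{n}$; nothing blows up near the boundary of the positive cone. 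The singularity $\tfrac{d}{d\lambda}\sqrt{\lambda}\to+\infty$ that you invoke lives in the $\rho$-coordinates, i.e.\ in $h(\rho)$ as a function of $\rho=\xi^{2}$; the whole point of passing from $D_{n}$ to $\Xi_{n}$ is that the substitution $\rho=\xi^{2}$ absorbs this singularity, so in the $\xi$-variable there is no barrier and your blow-up argument establishes nothing.

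Interiority of the maximizer is nevertheless true, but it follows from a first-order one-sided improvement, not from a barrier. Suppose $\xi\in\Xi_{n}$ has a unit vector $v$ with $\xi v=0$, and set $\xi_{t}=(\xi+t\outerp{v}{v})/\sqrt{1+t^{2}}$ for $t\ge 0$; this stays in $\Xi_{n}$ because $\xi\outerp{v}{v}=\outerp{v}{v}\xi=0$ gives $\tr\bigl((\xi+t\outerp{v}{v})^{2}\bigr)=1+t^{2}$. For the same reason $(\xi+t\outerp{v}{v})^{2}=\xi^{2}+t^{2}\outerp{v}{v}$, so the energy term $\tr(A\xi_{t}^{2})$ changes only at order $t^{2}$, while $\tr(\xi_{t})=(\tr(\xi)+t)/\sqrt{1+t^{2}}$ yields $\tfrac{d}{dt}\big|_{t=0^{+}}h_{1/2}(\xi_{t})=4\tr(\xi)>0$. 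Hence every boundary point of the cone is strictly improvable inside $\Xi_{n}$, the maximizer satisfies $\xi>0$, and your two-sided tangent-space argument then applies verbatim. With that replacement your proof is complete --- and indeed more careful than the published one, which ignores the positivity constraint altogether.
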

\begin{proof}
We will use Lagrange Multipliers. Let $g: M^{n} \to \C$, $g(\zeta) = \tr(\zeta^{2}) - 1$. The maximal $\xi$ then satisfies, for all $h\in M_{n}(\C)$ and some $\kappa \in \mathbb{R}$,

\begin{equation}\label{lagrange}
  \begin{cases}
    \D\p_{A}(\xi)(h) = \kappa \D g(\xi)(h) \\ g(\xi^{2}) = 0,
  \end{cases}
\end{equation}

\noindent with $\D\p_{A}(\xi)(h) = 4\tr(\xi)\tr(h) + \tr(A\{\xi, h\})$ and
$Dg(\xi)(h) = 2\tr(h\xi)$.\\

Taking $h=\outerp{e_{j}}{e_{i}}$ we have

$$  2\kappa\xi_{ij} = 4\tr(\xi)\delta_{i,j} + \{A,\xi\}_{ij}.$$

Since the above is true for every $i,j$, it corresponds to the coordinate equations of the matrix equation
\begin{equation}\label{Dhejei}
 2 \kappa \xi = 4\tr(\xi)I + A\xi + \xi A.
 \end{equation}
\end{proof}

Note that if $\xi =\rho^{1/2}_A $, then it follows from \eqref{mamai}
\begin{equation}\label{mamai3}  \kappa \rho_A\,=\, 2\tr(\rho^{1/2}_A)\, \rho^{1/2}_A +\frac{1}{2}\, ( A \,\rho_A + \rho^{1/2}_A A \rho^{1/2}_A).
\end{equation}

Indeed,
\begin{equation}\label{mamai3}  2 \kappa \rho_A = 2 \kappa \xi \, \xi= 4\tr(\xi)\, \xi + A\xi \xi + \xi A \xi= 4\tr(\xi)\, \xi + \rho_A + \xi A \xi.
\end{equation}

\begin{proposition}\label{contas}
  If $\xi$ maximizes $\p_{A}$, then the following statements are true
  \begin{enumerate}
      \item $\tr(A\xi) = (\kappa - 2n)\tr(\xi)$;
      \item $P_{A} (\rho_A)=P_{A}(\xi^{2}) = \p_{A}(\xi) = \kappa - 2n$;
  \end{enumerate}
\end{proposition}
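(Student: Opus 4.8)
The plan is to extract both identities directly from the Euler-Lagrange equation \eqref{mamai}, namely $2\kappa\xi = 4\tr(\xi)I + A\xi + \xi A$, by taking traces after suitable multiplications. Throughout I would use that $\xi \in \Xi_n$ forces $\tr(\xi^2) = \tr(\rho_A) = 1$, together with the cyclicity of the trace and $\tr(I) = n$.

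For statement (1), I would simply take the trace of \eqref{mamai} term by term. Since $\tr(I) = n$ and $\tr(\xi A) = \tr(A\xi)$ by cyclicity, this yields $2\kappa\tr(\xi) = 4n\tr(\xi) + 2\tr(A\xi)$, which rearranges at once to $\tr(A\xi) = (\kappa - 2n)\tr(\xi)$. This step is essentially immediate.

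For statement (2), I would instead multiply \eqref{mamai} on the right by $\xi$ before tracing, obtaining $2\kappa\xi^2 = 4\tr(\xi)\xi + A\xi^2 + \xi A\xi$. Taking the trace and using $\tr(\xi^2) = 1$ and $\tr(\xi A\xi) = \tr(A\xi^2)$ gives the key relation
\[
  \kappa = 2\tr(\xi)^2 + \tr(A\xi^2).
\]
I would then compare this with the definition of $\p_A$: by \eqref{mamai2} and $h_{1/2}(\xi) = 2\tr(\xi)^2 - 2n$, we have $\p_A(\xi) = 2\tr(\xi)^2 - 2n + \tr(A\xi^2)$, so substituting the boxed relation gives $\p_A(\xi) = \kappa - 2n$. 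The chain $P_A(\rho_A) = P_A(\xi^2) = \p_A(\xi)$ is just the identities $\rho_A = \xi^2$ and $\p_A(\xi) = P_A(\xi^2)$ recorded right after \eqref{mamai2}.

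There is no serious obstacle here; the only point requiring care is choosing the correct manipulation for each part — tracing \eqref{mamai} as is for (1), versus right-multiplying by $\xi$ for (2) — and remembering to invoke the normalization $\tr(\xi^2) = 1$ rather than treating $\tr(\xi)$ as fixed. Once those are in place, both identities drop out of elementary trace algebra.
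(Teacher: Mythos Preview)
Your argument is correct and is essentially the paper's own proof: the paper instructs to take $h=\Id$ and $h=\xi$ in the Lagrange condition \eqref{lagrange}, and since \eqref{mamai} is just \eqref{lagrange} written in matrix form, tracing \eqref{mamai} and tracing $\xi\cdot$\eqref{mamai} are exactly those two substitutions. The computations you wrote out are precisely what the paper's one-line proof encodes.
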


\begin{proof} In (\ref{lagrange}) take $h=\Id$ and $h=\xi$, respectively. \end{proof}

\medskip

\begin{remark}\label{diagonalizable}
The problem of finding the maximizing density $\rho=\xi^2$ for a general Hermitian $A$ can be reduced to the study of the diagonal case. In fact, since $A$ is hermitian, it is diagonalizable. We have $UAU^*=\Lambda$, the later a diagonal matrix, for some unitary matrix $U$. Multiplying on the left by $U$ and on the right by $U^*$ in the above matrix equation gives us
$$ 2 \kappa \ U \xi U^* = 4\tr(\xi)UU^* + UAU^*\ U\xi U^* + U \xi U^* \ U AU^*$$
$$ \iff 2 \kappa \eta = 4\tr(\eta) + \Lambda \eta + \eta \Lambda,$$

\noindent where $\eta:=U\xi U^*$. We arrive at a particular version of (\ref{Dhejei}) on which the matrix is diagonal.

\end{remark}

\begin{theorem}\label{Adiag=>xidiag}
  If $A=\Diag(a_{1},\ldots,a_{n})$, the $\xi$ that maximizes $\p_{A}$ is also diagonal, with
  $$\xi_{ii}= \frac{c}{\kappa - a_i},$$
  \noindent where $\kappa$ is given implicitly on the data $a_1,...,a_n$ and $c$ is such that $\tr(\xi^2)=1$. Consequently, the density that maximizes the pressure $P_A$ is
$$ \rho_A= \frac{1}{\tr\, (\rho_A)}\begin{pmatrix}
\frac{1}{(\kappa - a_1)^2} &   & \\
 & \frac{1}{(\kappa - a_2)^2}  & \\
& & \ddots & \\
& & & \frac{1}{(\kappa - a_n)^2}
\end{pmatrix}.$$
\end{theorem}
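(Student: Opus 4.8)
The plan is to start from the stationarity condition \eqref{Dhejei}, namely $2\kappa\xi = 4\tr(\xi)\Id + A\xi + \xi A$, which any maximizer of $\p_A$ must satisfy, and to read it off entry by entry once $A=\Diag(a_1,\ldots,a_n)$ is diagonal. Because $(A\xi)_{ij}=a_i\xi_{ij}$ and $(\xi A)_{ij}=a_j\xi_{ij}$, the $(i,j)$ component of the matrix equation collapses to the single scalar identity
\[
2\kappa\,\xi_{ij} = 4\tr(\xi)\,\delta_{ij} + (a_i+a_j)\,\xi_{ij}.
\]
This one relation carries all the information: its diagonal part ($i=j$) fixes the entries $\xi_{ii}$, while its off-diagonal part ($i\neq j$) will force $\xi$ to be diagonal.

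First I would treat the diagonal entries. For $i=j$ the identity reads $(2\kappa-2a_i)\xi_{ii}=4\tr(\xi)$, hence $\xi_{ii}=\tfrac{2\tr(\xi)}{\kappa-a_i}$. Since $\xi\in\Xi_n$ is positive with $\tr(\xi^2)=1$, it is nonzero and so $\tr(\xi)>0$; positivity of each $\xi_{ii}$ then forces $\kappa-a_i>0$ for every $i$ (the case $\kappa=a_i$ being excluded because the left side would vanish while the right side is strictly positive). Thus $\kappa>\max_i a_i$. Writing $c:=2\tr(\xi)$ gives exactly the claimed formula $\xi_{ii}=c/(\kappa-a_i)$.

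Next I would eliminate the off-diagonal entries. For $i\neq j$ the identity becomes $(2\kappa-a_i-a_j)\xi_{ij}=0$, and the coefficient $2\kappa-a_i-a_j$ is strictly positive because $2\kappa>2\max_i a_i\ge a_i+a_j$ for all $i,j$. Hence $\xi_{ij}=0$ and $\xi$ is diagonal. This is the step I expect to be the main obstacle: a priori a resonance $2\kappa=a_i+a_j$ could occur and spoil the vanishing, so the crux is that positivity of $\xi$ already pins $\kappa$ strictly above every $a_i$, which rules out the resonance automatically. I note that this argument also subsumes the reduction of Remark \ref{diagonalizable}, so no generality is lost by assuming $A$ diagonal.

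Finally I would close the system to identify $\kappa$ and $c$. Self-consistency $\tr(\xi)=\sum_i\xi_{ii}=2\tr(\xi)\sum_i\tfrac{1}{\kappa-a_i}$, together with $\tr(\xi)>0$, yields the implicit equation $1=2\sum_i\tfrac{1}{\kappa-a_i}$ that determines $\kappa$; since the right-hand side is strictly decreasing from $+\infty$ to $0$ on $(\max_i a_i,\infty)$, there is exactly one admissible root. The constant $c$ is then fixed by the constraint $\tr(\xi^2)=c^2\sum_i(\kappa-a_i)^{-2}=1$. Squaring the diagonal $\xi$ gives $\rho_A=\xi^2=\Diag\big((\kappa-a_1)^{-2},\ldots,(\kappa-a_n)^{-2}\big)\big/\sum_i(\kappa-a_i)^{-2}$, which is precisely the asserted form. (Existence of the maximizer itself I would invoke from continuity of $\p_A$ on the compact set $\Xi_n$; the computation above characterizes it.)
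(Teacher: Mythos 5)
Your proposal is correct and follows essentially the same route as the paper's proof: both read the Lagrange stationarity equation \eqref{Dhejei} entrywise for diagonal $A$, use positivity of $\xi$ and $\tr(\xi)>0$ to force $\kappa>a_i$ for all $i$, kill the off-diagonal entries via $2\kappa>a_i+a_j$, and pin down $\kappa$ through $\sum_i(\kappa-a_i)^{-1}=\tfrac12$ with $c$ fixed by normalization. The only (harmless) differences are the order of the steps and your explicit remarks on uniqueness of the root and existence of the maximizer by compactness.
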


\begin{proof}
For $A$ diagonal, the expression (\ref{Dhejei}) becomes
\begin{equation}\label{Dhejei_diag}
  2\kappa\xi_{ij} = 4\tr(\xi)\delta_{i,j} + \xi_{ij}(a_{i}+a_{j}).
\end{equation}

If we take $i=j$ in the expression above,

\begin{equation}\label{Dhejei_diag_ii0}
  \kappa\xi_{ii} = 2\tr(\xi) + \xi_{ii}a_{i}
\iff \newline
  (\kappa - a_i )\xi_{ii} = 2\tr(\xi)
\end{equation}

We know that $\tr(\xi) >0$, because $\tr(\xi)=0$ leads to $\xi=0$. Then $\xi_{ii}\ne 0$ and $\kappa > a_{i}$ for all $i$. So,

\begin{equation}\label{Dhejei_diag_ii}
  \frac{1}{\kappa - a_{i}} = \frac{\xi_{ii}}{2\tr(\xi)}
\end{equation}

and

\begin{equation}\label{Dhejei_diag_ii_sum}
  \sum_{i=1}^{n}\frac{1}{\kappa - a_{i}} = \frac{1}{2}.
\end{equation}

We need to find $\kappa$ to completely characterize the maximal $\xi$. Suppose that $a_1 = \max_i a_{i}$. Notice that $f(x)=\sum_{i=1}^{n} \frac{1}{x-a_i},$ for $x\neq a_i$ has a vertical asymptote at $a_1$, $\lim_{x\to a_1^{+}} f(x) = \infty$, and it decreases to $\lim_{x \to \infty} f(x) = 0$. By the intermediate value theorem, we have a $\kappa > a_1$ s.t. $f(\kappa)=1/2$.

Alternatively, one can find it as one of the roots of the following polynomial, which is the expression (\ref{Dhejei_diag_ii}) rewritten.
\begin{equation}\label{lambda_polinom}
  \frac{1}{2}\det(\kappa\Id - A) -
  \sum_{i=1}^{n}\det\Big(\kappa\Id - A + (a_{i}-\kappa+1)\outerp{e_{i}}{e_{i}}\Big) = 0.
\end{equation}

There is just one root that is bigger than all $a_i$, therefore it is $\kappa$.\\

Finally we prove the elements out of the diagonal are null. For $i\ne j$, the expression (\ref{Dhejei_diag_ii_sum}) gives us $2\kappa\xi_{ij} = \xi_{ij}(a_{i}+a_{j})$, or equivalently,
$$\xi_{ij}(2\kappa-(a_{i}+a_{j}))=0.$$  We know that $\kappa > a_i, \forall i$. Thus $2\kappa > a_i + a_j$. This leaves us with $\xi_{ij}=0$.\\

To conclude the pressure problem, we write
\begin{equation}
\xi = \Diag(\xi_{11},...,\xi_{nn}), \quad \xi_{ii}= \frac{c}{\kappa - a_i},
\end{equation}
where $c$ is the constant that makes $\tr(\xi ^2)=1$, i.e.,
$$ c = \left( \sum_{i=1}^{n} \frac{1}{(\kappa - a_i)^2} \right) ^{-1/2}.$$

This way, given $a_1,...,a_n$, we find $\kappa$, then $c$ and finally $\xi$.

\end{proof}

\begin{corollary} If $A$ is diagonalizable, with $UAU^*$ diagonal, then the maximizing density $\rho_A$ for $P_A$ is such that $U\rho_A U^*$ is diagonal.
\end{corollary}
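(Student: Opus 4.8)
The plan is to deduce the corollary from Theorem \ref{Adiag=>xidiag} together with the unitary covariance of the whole variational problem, exactly as anticipated in Remark \ref{diagonalizable}. Write $\Lambda = U A U^*$, which is diagonal by hypothesis, and consider the conjugation map $\Phi\colon D_{n} \to D_{n}$, $\Phi(\rho) = U \rho U^*$. Since $U$ is unitary, $\Phi$ is a bijection of $D_{n}$ onto itself (it preserves positivity and trace). First I would show that $\Phi$ intertwines the two pressure functionals, that is, $P_{A}(\rho) = P_{\Lambda}(\Phi(\rho))$ for every $\rho \in D_{n}$.

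To verify this identity I would treat the two terms of $P_{A}$ separately. For the energy term, cyclicity of the trace gives $\tr(A\rho) = \tr(U^* \Lambda U \rho) = \tr(\Lambda\, U\rho U^*) = \tr(\Lambda\, \Phi(\rho))$. For the entropy term, Proposition \ref{h_eigen_dependence} shows that $h$ depends only on the eigenvalues of its argument; since $\Phi(\rho) = U\rho U^*$ has the same spectrum as $\rho$, one gets $h(\Phi(\rho)) = h(\rho)$. Adding the two contributions yields $P_{A}(\rho) = h(\rho) + \tr(A\rho) = h(\Phi(\rho)) + \tr(\Lambda\,\Phi(\rho)) = P_{\Lambda}(\Phi(\rho))$, as claimed.

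With the intertwining in hand, the argument closes quickly. Because $\Phi$ is a bijection of $D_{n}$ preserving the value of the functional, it carries maximizers of $P_{A}$ to maximizers of $P_{\Lambda}$ and conversely; in particular $\Phi(\rho_{A}) = U\rho_{A} U^*$ maximizes $P_{\Lambda}$. Now $\Lambda$ is diagonal, so Theorem \ref{Adiag=>xidiag} applies and tells us that the maximizing density for $P_{\Lambda}$ is diagonal. Hence $U\rho_{A} U^*$ is diagonal, which is the assertion of the corollary.

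There is no serious analytic obstacle here; the content is entirely in recognizing that both ingredients of the pressure are unitarily invariant (the entropy through its dependence on the spectrum alone, the energy through cyclicity of the trace), so that the maximization commutes with conjugation. The one point deserving care is to transfer the \emph{maximizer} and not merely a critical point of the Lagrange system: this is guaranteed precisely because $\Phi$ is a value-preserving bijection of the constraint set $D_{n}$, so the global supremum and its argmax are transported intact. If Theorem \ref{Adiag=>xidiag} is read as asserting uniqueness of the diagonal maximizer, one also concludes that $\rho_{A}$ itself is unique and obtained by conjugating back the explicit diagonal density displayed there.
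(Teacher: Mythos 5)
Your proof is correct, and it reaches the paper's conclusion by the same overall strategy (reduce to the diagonal case via conjugation by $U$, then invoke Theorem \ref{Adiag=>xidiag}), but the transport mechanism is genuinely different. The paper's own proof goes through Remark \ref{diagonalizable}: it conjugates the first-order Lagrange equation \eqref{Dhejei}, so that $\eta = U\xi U^*$ satisfies the same stationarity equation with the diagonal matrix $\Lambda = UAU^*$ in place of $A$, and diagonality of $\eta$ then follows from the analysis inside the theorem's proof. You instead conjugate the variational problem itself: you prove $P_{A}(\rho) = P_{\Lambda}(U\rho U^*)$ using cyclicity of the trace for the energy term and the spectral dependence of $h$ (Proposition \ref{h_eigen_dependence}) for the entropy term, so that $\Phi(\rho)=U\rho U^*$ is a value-preserving bijection of $D_{n}$ carrying global maximizers to global maximizers. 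Your route buys a cleaner treatment of the maximizer-versus-critical-point issue, which you rightly single out: the paper's argument transports solutions of the stationarity equation and tacitly identifies the conjugated maximizer with the solution to which the theorem applies, whereas your bijection argument transports the argmax directly and requires no differentiation at all. The paper's route is shorter given that Remark \ref{diagonalizable} has already performed the computation, and it stays within the Lagrange-multiplier framework used throughout that section; both arguments ultimately rest on Theorem \ref{Adiag=>xidiag} for the diagonal case.
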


\begin{proof}

If $A$ was not diagonal at first, we proceed as in Remark \ref{diagonalizable} and use the last theorem to find a maximal $\eta=U\xi U^*$ which is diagonal. Then $\eta^2=U\xi^2U^*=U\rho_AU^*$ is diagonal.
\end{proof}

\medskip

\begin{remark}
  Using (\ref{Dhejei_diag_ii}), we know that if $A = a_{1}\Id$ then $\xi_{ii} = \xi_{11}$ for all $i$. Since $A$ is diagonal and $\tr\,(\xi^2)= \sum_{i}\xi_{ii}^2=n\xi_{11}^2=1$,
  it follows that $\xi = \frac{1}{\sqrt{n}}\Id$ is the only $\xi$ that
  maximizes $\p_{A}$.
\end{remark}

\medskip

\begin{example}
Let
$$ A= \begin{pmatrix}
0 & 1 & 0 \\
1 & 0 & 0 \\
0 & 0 & 2 \\
\end{pmatrix}.$$
What is the equilibrium density $\rho_A$, i.e., the density that maximizes $P_A$?
Let's diagonalize $A$.
$$ U=\frac{1}{\sqrt{2}}\begin{pmatrix}
1 & -1 & 0 \\
1 & 1 & 0 \\
0 & 0 & \sqrt{2} \\
\end{pmatrix}, \quad UAU^* = \begin{pmatrix}
-1 & 0 & 0 \\
0 & 1 & 0 \\
0 & 0 & 2 \\
\end{pmatrix}.
$$
\end{example}

Now we apply Theorem \ref{Adiag=>xidiag}. $\kappa$ satifies
$$\dfrac{1}{\kappa+1} + \dfrac{1}{\kappa-1} + \dfrac{1}{k-2} = \dfrac{1}{2},$$
\noindent and $\kappa > -1,1,2$. We get $\kappa \approx 6.902$. Thus, the maximizing density  for $UAU^*$ is
$$ \rho_{UAU^*}= \frac{1}{\tr\, (\rho_{UAU^*})}\begin{pmatrix}
\frac{1}{(6.902 + 1)^2} &   & \\
 & \frac{1}{(6.902 - 1)^2}  & \\
& & \frac{1}{(6.902 - 2)^2}
\end{pmatrix}$$

$$ = \begin{pmatrix}
0.186 &   & \\
 & 0.332  & \\
 &  & 0.482
 \end{pmatrix}.$$

Finally,
$$\rho_A = U^*\rho_{UAU^*}U $$

$$ = \begin{pmatrix}
0.259 & 0.073  & 0 \\
0.073 & 0.259  & 0 \\
0 & 0  & 0.482
 \end{pmatrix}.$$

 Notice that $$P(\rho_A)=h(\rho_A)+\tr(A\rho_A)$$
 $$=2\,\tr(\rho_A^{1/2})^2-2n+\tr(UAU^*U\rho_AU^*)$$
 $$=2\,\tr((U\rho_AU^*)^{1/2})^2-2n+\tr(UAU^*U\rho_AU^*)$$
 $$ = 2\, (\sqrt{0.186}+\sqrt{0.332}+\sqrt{0.482})^2-6+(-1\cdot 0.186+ 1\cdot 0.332 + 2\cdot 0.482) $$
 $$\approx 0.902 \approx \kappa - 6.$$
in accordance with Proposition \ref{contas}.

\section{The pressure $ P_A$ as an eigenvalue problem} \label{trans}

Consider the linear operator $\mathfrak{L}_A$
\begin{equation}\label{Dheje35}
 \xi \to  \mathfrak{L}_A(\xi)=   2\tr(\xi)I + \frac{1}{2} (A\xi + \xi A).
 \end{equation}

 Suppose  $\rho_A$  is an equilibrium density operator for  the selfadjoint matrix $A$.

From \eqref{mamai} we get that $\xi =\rho^{1/2}_A $ is an eigenmatrix for the linear operator $\mathfrak{L}_A$ associated to the eigenvalue $\kappa$, that is
\begin{equation}\label{mamai6}
\mathfrak{L}_A (\rho^{1/2}_A) = \kappa\, \rho^{1/2}_A
\end{equation}

From item 2. in Theorem \ref{contas} we get that
$P_{A} (\rho_A) = \kappa - 2n.$

In this way, the  equilibrium density operator is related to an eigenvalue problem in a similar fashion as in classical Thermodynamic Formalism.

The equilibrium matrix $\rho_A$ satisfies
\begin{equation}\label{mamai5}  \kappa \rho_A\,=\, 2\tr(\rho^{1/2}_A)\, \rho^{1/2}_A +\frac{1}{2}\, ( A \,\rho_A + \rho^{1/2}_A A \rho^{1/2}_A),
\end{equation}
but this is not exactly a linear relation.

\medskip

\section{A connection between $h(\rho)$ and $I(\rho)$} \label{Kaka}

We will present a connection between the concepts of $h(\rho)$ and $-I(\rho)$.

Recall that $\mathfrak{1}=Id/n$ satisfies the detailed balance condition, which is the quantum equivalent of reversibility.

We are going to establish a connection of the Laplacian-entropy of  Definition \ref{ent} with the one in (5.18) of \cite{Str}.
The notion of Radon-Nikodym derivative is not clear in the quantum setting, but we can consider a natural  analogy in our reasoning  and we write $\frac{d\nu}{d\mu} = A$, if
$$ \tr(\nu U) = \tr(\mu AU).$$
\noindent This corresponds of  writing $A$ in the form $A=\mu^{-1} \nu $.
When looking at (5.18), $L$ is simmetric in $L^2(\mu)$, and our operator
$\mathcal{L}_0$ satisfies d.b.c. for $\mathfrak{1}$. Therefore, here we will
address the computation of  the corresponding expression
$\frac{d\nu}{d\mathfrak{1}}$. Then, it is natural to consider  $A= \mathfrak{1}^{-1}\nu = n \nu$. Therefore, (5.18) in our setting corresponds to
$$I(\nu) = - \int (n\nu)^{1/2}\mathcal{L}_0((n\nu)^{1/2}) \, d\mathfrak{1} $$
$$ = - n\tr(\mathfrak{1} \, \nu^{1/2}\mathcal{L}_0(\nu^{1/2}) )$$
$$ = - \tr(\nu^{1/2}\mathcal{L}_0(\nu^{1/2})),$$
\noindent and then, $-h(\nu)$ come up.

\begin{remark}\label{simplifica} Recall that for an $A=\sum_{kl} a_{kl} \ket{k}\bra{l}$,
$$ \mathcal{L}_0(A) = \sum_{i,j=1}^n (V_{ij}^*[A,V_{ij}] + [V_{ij},A]V_{ij}^*), \, \text{ for } V_{ij}=\ket{i}\bra{j}$$
$$ = \sum_{i,j=1}^n V_{ji}AV_{ij}-V_{ji}V_{ij}A + V_{ij}AV_{ji}-AV_{ij}V_{ji} $$
$$ = 2 \sum_{i,j=1}^n V_{ji}AV_{ij} - \sum_{i,j=1}^n \ket{j}\bra{i}\ket{i}\bra{j} A - \sum_{i,j=1}^n A \ket{i}\bra{j}\ket{j}\bra{i} $$
$$ = 2 \sum_{i,j=1}^n V_{ji}AV_{ij} - n\,\sum_{j=1}^n \ket{j}\bra{j} A - n\,\sum_{i=1}^n A \ket{i}\bra{i} $$
$$ = 2 \sum_{i,j=1}^n V_{ji}AV_{ij} - 2n A$$
$$ = 2 \sum_{i,j=1}^n \ket{j}\bra{i} A \ket{i}\bra{j} - 2n A$$
$$ = 2 \sum_{i,j,k,l=1}^n a_{kl} \ket{j}\bra{i}  \ket{k}\bra{l} \ket{i}\bra{j} - 2n A$$
$$ = 2 \sum_{i,j=1}^n a_{ii} \ket{j}\bra{j} - 2n A$$
$$ = 2 \sum_{i=1}^n a_{ii} \sum_{j=1}^n \ket{j}\bra{j} - 2n A$$
$$ = 2 \tr\, (A)I - 2n A.$$
\end{remark}

The next step is to write the entropy as an  infimum. We will show that:

\begin{theorem}\label{ent_as_inf} Given the density matrix $\rho$
$$ h(\rho) =  \inf_{A>0} \tr\,(\rho \,A^{-1}\, \mathcal{L}_0(A)).$$
\end{theorem}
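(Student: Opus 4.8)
The plan is to reduce the variational expression to an elementary matrix inequality using the closed form of $\mathcal{L}_0$ already obtained in Remark \ref{simplifica}. Since $\mathcal{L}_0(A) = 2\tr(A)\,I - 2n\,A$ for every $A$, for any invertible $A>0$ one computes
$$\tr(\rho\,A^{-1}\,\mathcal{L}_0(A)) = 2\tr(A)\,\tr(\rho A^{-1}) - 2n\,\tr(\rho A^{-1}A) = 2\,\tr(A)\,\tr(\rho A^{-1}) - 2n,$$
using $\tr(\rho)=1$. By Proposition \ref{h_eigen_dependence} we have $h(\rho) = 2(\tr\rho^{1/2})^2 - 2n$, so the theorem is equivalent to the scalar-free claim
$$\inf_{A>0}\ \tr(A)\,\tr(\rho A^{-1}) = \left(\tr\rho^{1/2}\right)^2.$$

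The key step is a Cauchy--Schwarz argument in the Hilbert--Schmidt inner product $\langle X,Y\rangle = \tr(X^*Y)$, with norm $\|X\|_2^2 = \tr(X^*X)$. I would write $A=B^2$ for the positive square root $B=A^{1/2}>0$, so that $\tr(A)=\|B\|_2^2$ and, using that $B^{-1}$ is Hermitian together with cyclicity of the trace, $\tr(\rho A^{-1}) = \tr(B^{-1}\rho B^{-1}) = \|\rho^{1/2}B^{-1}\|_2^2$. The decisive observation is that the cross inner product collapses to something $A$-independent: $\langle B,\rho^{1/2}B^{-1}\rangle = \tr(B\,\rho^{1/2}B^{-1}) = \tr(\rho^{1/2})$. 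Cauchy--Schwarz then gives $\tr(\rho^{1/2}) = \langle B,\rho^{1/2}B^{-1}\rangle \le \|B\|_2\,\|\rho^{1/2}B^{-1}\|_2$, and squaring yields $\tr(A)\tr(\rho A^{-1}) \ge (\tr\rho^{1/2})^2$ for every $A>0$, which is the desired lower bound.

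For the reverse inequality I would exhibit a minimizer. Equality in Cauchy--Schwarz forces $\rho^{1/2}B^{-1}$ proportional to $B$, i.e. $A=B^2\propto\rho^{1/2}$; choosing $A=\rho^{1/2}$ gives $\tr(A)=\tr(\rho^{1/2})$ and $\tr(\rho A^{-1}) = \tr(\rho\,\rho^{-1/2}) = \tr(\rho^{1/2})$, so the bound is attained when $\rho$ is invertible. Combining this with the first display proves $\tr(\rho A^{-1}\mathcal{L}_0(A)) \ge h(\rho)$ with equality at $A=\rho^{1/2}$, whence the infimum equals $h(\rho)$.

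I expect the main obstacle to be the singular case. When $\rho$ is not invertible, $\rho^{1/2}$ fails to be strictly positive and is excluded from the admissible set $\{A>0\}$, so no exact minimizer exists. I would handle this by the approximation $A_\varepsilon = \rho^{1/2} + \varepsilon I > 0$ and check that $\tr(A_\varepsilon)\tr(\rho A_\varepsilon^{-1}) \to (\tr\rho^{1/2})^2$ as $\varepsilon\downarrow 0$: splitting over the support and the kernel of $\rho$, one notes that $\rho A_\varepsilon^{-1}$ vanishes on $\ker\rho$ and converges to $\rho^{1/2}$ on the support, so that the value $h(\rho)$ remains the infimum even though it is not a minimum.
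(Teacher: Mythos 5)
Your proof is correct, and after the common first step (using Remark \ref{simplifica} to reduce the statement to $\inf_{A>0}\tr(A)\,\tr(\rho A^{-1}) = (\tr\,\rho^{1/2})^2$, which is exactly how the paper also begins) it takes a genuinely different route. The paper substitutes $A = B\rho^{1/2}$, asserts this does not change the infimum, and invokes Lemma \ref{inflemma}, $\inf_{B>0}\tr(BU)\,\tr(UB^{-1}) = \tr(U)^2$, proved by diagonalizing $B$ and applying $x + 1/x \ge 2$ to eigenvalue ratios. You instead prove the lower bound by Cauchy--Schwarz in the Hilbert--Schmidt inner product, exploiting that the cross term $\tr(B\rho^{1/2}B^{-1}) = \tr(\rho^{1/2})$ is independent of $A = B^2$, and the equality case of Cauchy--Schwarz identifies $A \propto \rho^{1/2}$ as the minimizer. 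Your route buys two things. First, it sidesteps the paper's substitution step, which is delicate: $B\rho^{1/2}$ is neither positive nor even Hermitian for general $B>0$, and the substitution requires $\rho$ invertible, so the claim that the infimum is unchanged is an assertion that itself needs justification. Second, your $\varepsilon$-regularization $A_\varepsilon = \rho^{1/2} + \varepsilon I$ handles singular $\rho$ explicitly (where the infimum is not attained), a case the paper's argument silently ignores. Conversely, the paper's approach isolates a clean standalone inequality (Lemma \ref{inflemma}) of independent interest, proved by elementary means; in essence it is your Cauchy--Schwarz argument written out in the eigenbasis of $B$.
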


For the proof, we will need the following result:

\begin{lemma}\label{inflemma} In the space $M_n$ of matrices $n\times n$, is true that
$$ \inf_{B>0} \tr\,(BU)\,\tr\,(UB^{-1}) = \tr\,(U)^2.$$
\end{lemma}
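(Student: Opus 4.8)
The plan is to read the product $\tr(BU)\,\tr(UB^{-1})$ as the right-hand side of a Cauchy--Schwarz inequality for the Hilbert--Schmidt (GNS) inner product $\inner{X}{Y}=\tr(X^{*}Y)$ on $M_n$. Since the lemma will be applied with $U=\rho^{1/2}>0$, I work under the assumption $U\ge 0$, so that $U^{1/2}$ exists, and $B>0$, so that $B^{1/2}$ and $B^{-1/2}$ exist. The key idea is to split the two factors symmetrically by introducing
\begin{equation*}
X = B^{1/2}\,U^{1/2}, \qquad Y = B^{-1/2}\,U^{1/2}.
\end{equation*}
Using cyclicity of the trace together with $U^{1/2}U^{1/2}=U$ and $B^{1/2}B^{-1/2}=I$, a direct computation gives $\inner{X}{X}=\tr(U^{1/2}BU^{1/2})=\tr(BU)$, then $\inner{Y}{Y}=\tr(U^{1/2}B^{-1}U^{1/2})=\tr(UB^{-1})$, and the cross term collapses to $\inner{X}{Y}=\tr(U^{1/2}B^{1/2}B^{-1/2}U^{1/2})=\tr(U)$.

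With these three identities in hand, Cauchy--Schwarz $|\inner{X}{Y}|^{2}\le \inner{X}{X}\,\inner{Y}{Y}$ reads
\begin{equation*}
\tr(U)^{2}\le \tr(BU)\,\tr(UB^{-1}),
\end{equation*}
which holds for every $B>0$ and hence yields the lower bound $\inf_{B>0}\tr(BU)\,\tr(UB^{-1})\ge \tr(U)^{2}$. For the matching upper bound I would simply evaluate the functional at $B=I$ (or at any scalar $B=tI$), giving $\tr(U)\,\tr(U)=\tr(U)^{2}$ exactly. Combining the two inequalities shows the infimum equals $\tr(U)^{2}$ and is in fact attained.

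There is no serious obstacle here; the only points requiring care are bookkeeping ones. I must justify the existence of the square roots $U^{1/2}$, $B^{1/2}$, $B^{-1/2}$ (guaranteed by $U\ge 0$ and $B>0$) and confirm that all the traces involved are real, so that squaring in Cauchy--Schwarz is legitimate. The one genuinely substantive choice is the symmetric splitting $X=B^{1/2}U^{1/2}$, $Y=B^{-1/2}U^{1/2}$: it is engineered precisely so that the cross term $\inner{X}{Y}$ loses all dependence on $B$ and reduces to $\tr(U)$, which is what forces $\tr(U)^{2}$ to appear as the bound. If one wishes, the equality case of Cauchy--Schwarz (proportionality of $X$ and $Y$) identifies the minimizers as exactly the scalar matrices $B=tI$, but this refinement is not needed for the statement.
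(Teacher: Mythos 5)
Your proof is correct, and it takes a genuinely different route from the paper. The paper's argument is coordinate-based: it diagonalizes $B$ with eigenvalues $b_i$, writes $\tr(BU)=\sum_i b_i u_{ii}$ and $\tr(UB^{-1})=\sum_i u_{ii}/b_i$ in that eigenbasis, and then bounds the product below by symmetrizing and applying the scalar inequality $x+1/x\ge 2$ to the ratios $b_j/b_i$, before checking attainment at $B=I$. Your argument replaces this computation by a single application of Cauchy--Schwarz for the Hilbert--Schmidt inner product, with the splitting $X=B^{1/2}U^{1/2}$, $Y=B^{-1/2}U^{1/2}$ chosen so that the cross term $\tr(X^{*}Y)=\tr(U)$ is independent of $B$; attainment at $B=I$ is the same in both. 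What your approach buys: it is coordinate-free, shorter, and the equality case of Cauchy--Schwarz hands you the set of minimizers (scalar $B$, at least when $U$ is invertible) for free. What the paper's approach buys: it is entirely elementary (only diagonalization and $x+1/x\ge2$), requiring no operator square roots of $B$ and $U$. One point worth making explicit in either version: the statement genuinely needs $U\ge 0$, which you state and the paper leaves implicit --- its step
\begin{equation*}
\frac{1}{2}\sum_{i,j}\Bigl(\frac{b_j}{b_i}+\frac{b_i}{b_j}\Bigr)u_{ii}u_{jj}\ \ge\ \sum_{i,j}u_{ii}u_{jj}
\end{equation*}
uses $u_{ii}u_{jj}\ge 0$; for an indefinite Hermitian $U$ such as $\mathrm{Diag}(1,-1)$ the infimum is $-\infty$, so your standing assumption $U\ge 0$ (satisfied in the application $U=\rho^{1/2}$) is not a mere bookkeeping point but an essential hypothesis.
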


\begin{proof}
Let $B>0$ be a general positive matrix. Let $\ket{i}$ be the orthonormal basis of eigenvectors of $B$. Then, we can write $B=\sum_{i=1}^{n} b_i \, \ket{i}\bra{i}$, and in this basis, $U$ can be written as $U=\sum_{j,k=1}^{n}\, u_{jk} \, \ket{j}\bra{k}$. Thus,
$$ BU = \sum_{ijk}\, b_i u_{jk} \, \ket{i}\bra{i}\ket{j}\bra{k} = \sum_{ik} b_i u_{ik}\,  \ket{i}\bra{k}$$
$$ \Longrightarrow \tr(BU)= \sum_{i=1}^{n} b_i u_{ii}.$$

$$ UB^{-1} = \sum_{ijk} \frac{u_{jk}}{b_i}\, \ket{j}\bra{k}\ket{i}\bra{i} = \sum_{ij} \frac{u_{ji}}{b_i}\, \ket{j}\bra{i}$$
$$ \Longrightarrow \tr(UB^{-1})= \sum_{i=1}^{n} \frac{u_{ii}}{b_i}.$$

\noindent Therefore,

$$ \tr(UB)\,\tr(BU^{-1}) = \sum_{i,j=1}^{n} \frac{b_j}{b_i} \, u_{ii}u_{jj} = \frac{1}{2} \, \sum_{i,j=1}^{n} \left(\frac{b_j}{b_i}+\frac{b_i}{b_j}\right)\, u_{ii}u_{jj}$$
$$ \ge \sum_{i,j=1}^{n} u_{ii}u_{jj} = \left(\sum_{i=1}^{n} u_{ii}\right)^2=\tr(U)^2.$$

\noindent Notice that we used the fact that $x+1/x \ge 2$, for all $x>0$. By now, we have the lower bound $\tr(U)^2$. To finish, notice that $B=Id$ achieves this bound, so we conclude that
$$ \inf_{B>0} \tr(BU)\,\tr(UB^{-1}) = \tr(U)^2.$$
\end{proof}

Now we proceed to prove the theorem.
\begin{proof}
Using Remark \ref{simplifica}, we have
$$ A^{-1}\mathcal{L}_0(A) = 2 A^{-1} \tr(A) - 2n I$$
$$ \Rightarrow \tr(\rho A^{-1}\mathcal{L}_0(A)) = 2\, \tr(\rho A^{-1}).\tr(A)-2n\,\tr(\rho)$$
$$= 2 \,\tr(\rho A^{-1})\tr(A) - 2n.$$

Writing $A$ in the form $A=B\rho^{1/2}$, it will not change the  infimum, which will be now taken over $B>0$. This means
$$ \inf_{A>0} 2\, \tr(\rho A^{-1})\tr(A) - 2n = 2 \,\inf_{B>0}\tr(\rho^{1/2} B^{-1})\tr(B\rho^{1/2}) - 2n  $$
$$ = 2\,\tr(\rho^{1/2})^2 - 2n = h(\rho).$$
\noindent As the  infimum was computed by Lemma \ref{inflemma} we proved the claim.
\end{proof}

\section{From quantum to classical} \label{cla}

\begin{definition} Given $\sigma$ and an infinitesimal generator $\mathcal{L}$ of the form \eqref{gre},  we say that the matrix $Q$ is the matrix associated to $\mathcal{L}$, if $Q$ is   $n\times n$ real matrix with entries
$Q_{i,j} = \tr \, [F_{i,i}  \mathcal{L} F_{j,j}]$,
where $F_{i,i} =\ket{\eta_{i} }\bra{ \eta_{i}}$.
\end{definition}

This matrix is line sum zero with positive  values outside  the diagonal (see \cite{CaMa}). The matrix  $Q^\dag$, the transpose of $Q$, has a stationay   eigenvector probability $\overrightarrow{\sigma} \in (0,1)^n$ associated to the eigenvalue $0$.
\begin{lemma} Given $l,k$, the entry   $Q_{l,k} = \tr \,[F_{l,l}  \, \mathcal{L} F_{k,k}]$ is given by
\begin{equation}\label{boro} Q_{l,k} = 2\, e^{-w_{k,l}/2} -  2 \delta_{l,k} \, \sum_{i=1}^n e^{-w_{i,l}/2}.
\end{equation}
\end{lemma}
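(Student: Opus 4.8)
The plan is to evaluate $\mathcal{L}(F_{k,k})$ explicitly from the Lindblad form \eqref{gre} and then pair the result against $F_{l,l}$ in the GNS inner product. Writing $A = F_{k,k} = \outerp{\eta_k}{\eta_k}$ and recalling $V_{i,j} = \outerp{\eta_i}{\eta_j}$, the whole computation is driven by the single orthonormality relation $\langle \eta_a \mid \eta_b\rangle = \delta_{a,b}$, which collapses every product of the $\outerp{\eta}{\eta}$ factors to a Kronecker delta times one surviving rank-one operator. There is no conceptual difficulty here; the only risk is bookkeeping of indices, so I would organize the two summand types separately.

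First I would expand the two pieces of the generator. For fixed $i,j$ a direct computation, inserting $F_{k,k}=\outerp{\eta_k}{\eta_k}$ between the bra--ket factors and using orthonormality, gives
\begin{align*}
V_{i,j}^*\,[F_{k,k},V_{i,j}] &= \delta_{i,k}\,\outerp{\eta_j}{\eta_j} - \delta_{j,k}\,\outerp{\eta_j}{\eta_k},\\
[V_{i,j}^*,F_{k,k}]\,V_{i,j} &= \delta_{i,k}\,\outerp{\eta_j}{\eta_j} - \delta_{k,j}\,\outerp{\eta_k}{\eta_j}.
\end{align*}
Adding these and multiplying by $e^{-w_{i,j}/2}$, the term proportional to $\delta_{i,k}$ contributes twice, while the two off-diagonal terms are forced onto $j=k$ by their deltas.

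Summing over $i,j$ then yields the clean form
$$\mathcal{L}(F_{k,k}) = 2\sum_{j=1}^{n} e^{-w_{k,j}/2}\,F_{j,j} \;-\; 2\Big(\sum_{i=1}^{n} e^{-w_{i,k}/2}\Big)\,F_{k,k},$$
so that $\mathcal{L}(F_{k,k})$ is diagonal in the eigenbasis $\{\eta_j\}$. Finally I would compute $Q_{l,k} = \tr[F_{l,l}\,\mathcal{L}(F_{k,k})]$ using $\tr[F_{l,l}F_{j,j}] = \delta_{l,j}$ (the $F_{j,j}$ being mutually orthogonal rank-one projections). This picks out the $j=l$ term from the first sum and the factor $\delta_{l,k}$ from the second, giving
$$Q_{l,k} = 2\,e^{-w_{k,l}/2} - 2\,\delta_{l,k}\sum_{i=1}^{n} e^{-w_{i,k}/2}.$$
Since the second term only survives when $l=k$, I may replace $w_{i,k}$ by $w_{i,l}$ inside that sum, which is precisely \eqref{boro}. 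The main obstacle, such as it is, is purely the careful tracking of which Kronecker delta forces which summation index, rather than any genuine analytic difficulty.
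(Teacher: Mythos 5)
Your proposal is correct and follows essentially the same route as the paper's proof: expand the two commutator terms of \eqref{gre} applied to $F_{k,k}$ via orthonormality of $\{\eta_j\}$, observe that the two off-diagonal pieces merge into $-2\delta_{j,k}F_{k,k}$, sum to get $\mathcal{L}(F_{k,k}) = 2\sum_j e^{-w_{k,j}/2}F_{j,j} - 2\big(\sum_i e^{-w_{i,k}/2}\big)F_{k,k}$, and then pair with $F_{l,l}$ under the trace. The final replacement of $w_{i,k}$ by $w_{i,l}$ under the factor $\delta_{l,k}$ is also exactly what the paper does implicitly.
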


\begin{proof}
Indeed, when, $A=V_{k,k}= \outerp{\eta_{k}}{\eta_{k}} $ we get
$$  V_{i,j}^* [ A, V_{i,j}] =  \outerp{\eta_{j}}{\eta_{i}} \, \big[\, A \,,\outerp{\eta_{i}}{\eta_{j}}\,\big]=$$
$$ \outerp{\eta_{j}}{\eta_{i}} \,\, \big(\, \,\outerp{\eta_{k}}{\eta_{k}} \,\,\outerp{\eta_{i}}{\eta_{j}} - \outerp{\eta_{i}}{\eta_{j}} \,\, \outerp{\eta_{k}}{\eta_{k}} \,\big)=$$
$$ \delta_{i,k}\, \outerp{\eta_{j}}{\eta_{j}}
  - \delta_{j,k} \, \outerp{\eta_{j}}{\eta_{k}}$$

  Moreover,  when $A=V_{k,k}= \outerp{\eta_{k}}{\eta_{k}} $
$$   [ V_{i,j}^*, A]  \, V_{i,j}= \big[\,  \outerp{\eta_{j}}{\eta_{i}} ,\, A \,\big]\,\,\outerp{\eta_{i}}{\eta_{j}}=$$
$$ \,\, \big(\, \,\outerp{\eta_{j}}{\eta_{i}} \,\,\outerp{\eta_{k}}{\eta_{k}} - \,\outerp{\eta_{k}}{\eta_{k}} \,\,\outerp{\eta_{j}}{\eta_{i}} \,\big)\,\,\outerp{\eta_{i}}{\eta_{j}} =$$
$$ \delta_{i,k}\, \outerp{\eta_{j}}{\eta_{j}}
  - \delta_{j,k} \, \outerp{\eta_{k}}{\eta_{j}}.$$

  Then,  when $A=V_{k,k}= \outerp{\eta_{k}}{\eta_{k}} $
  $$ e^{-w_{i,j}/2} \, \big( V_{i,j}^* [ A, V_{i,j}]  + [V_{i,j}^*, A]  \, V_{i,j} \big)=  $$
  $$ e^{-w_{i,j}/2} \, \big(\,   2 \delta_{i,k}   \outerp{\eta_{j}}{\eta_{j}}   -  \delta_{j,k} \, \outerp{\eta_{j}}{\eta_{k}}   -  \delta_{j,k} \, \outerp{\eta_{k}}{\eta_{j}}  \,\big)$$
  $$ = e^{-w_{i,j}/2} \, \big(\,   2 \delta_{i,k}   \outerp{\eta_{j}}{\eta_{j}}   - 2 \delta_{j,k} \, \outerp{\eta_{k}}{\eta_{k}}  \,\big), \, $$
  and finally
$$ \mathcal{L}( A) = \sum_{i,j=1}^n   e^{-w_{i,j}/2} \, ( V_{i,j}^* [ A, V_{i,j}]  + V_{i,j}^*, A]  \, V_{i,j} )=$$
$$ \sum_{i=1}^n    \sum_{j=1}^n  e^{-w_{i,j}/2} \, (\,   2 \delta_{i,k}   \outerp{\eta_{j}}{\eta_{j}}   - 2 \delta_{j,k} \, \outerp{\eta_{k}}{\eta_{k}}\,)= $$
$$ 2 \,\sum_{j=1}^n   e^{-w_{k,j}/2}  \, \outerp{\eta_{j}}{\eta_{j}}   -  2\,\sum_{i=1}^n      e^{-w_{i,k}/2}    \, \outerp{\eta_{k}}{\eta_{k}}  $$

Therefore, when $A=V_{k,k}= \outerp{\eta_{k}}{\eta_{k}} $, given $l$
$$  \outerp{\eta_{l}}{\eta_{l}} \,\,\mathcal{L}( A) =   2\, \outerp{\eta_{l}}{\eta_{l}} \sum_{j=1}^n   e^{-w_{k,j}/2}  \, \outerp{\eta_{j}}{\eta_{j}}   -  2\, \outerp{\eta_{l}}{\eta_{l}} \sum_{i=1}^n      e^{-w_{i,k}/2}    \, \outerp{\eta_{k}}{\eta_{k}}  $$

$$ =  2\, \sum_{j=1}^n   e^{-w_{k,j}/2}  \outerp{\eta_{l}}{\eta_{l}} \outerp{\eta_{j}}{\eta_{j}}   -  2\, \sum_{i=1}^n      e^{-w_{i,k}/2}  \outerp{\eta_{l}}{\eta_{l}} \outerp{\eta_{k}}{\eta_{k}}  $$
$$ =  2\, e^{-w_{k,l}/2}  \outerp{\eta_{l}}{\eta_{l}} -  2 \delta_{l,k} \, \sum_{i=1}^n      e^{-w_{i,l}/2}  \outerp{\eta_{l}}{\eta_{l}}.$$

From this,

\begin{equation} \label{asl} Q_{lk} = 2\, e^{-w_{k,l}/2} -  2 \delta_{l,k} \, \sum_{i=1}^n e^{-w_{i,l}/2}
\end{equation}
$$ = \left\{
        \begin{array}{ll}
            2\, e^{-w_{k,l}/2} =2 \, e^{(\lambda_l- \lambda_k)/2}& \quad \text{if } l\neq k \\
            \\
           2\, e^{-w_{l,l}/2} -  2 \, \sum_{i=1}^n e^{-w_{i,l}/2} = 2 - 2 \sum_{i=1}^n e^{(\lambda_l- \lambda_i)/2} & \quad \text{if } l=k \\
        \end{array}
    \right.
$$

\end{proof}

Notice that:
$$ \sum_{k=1}^n Q_{lk} = Q_{ll} + \sum_{k: \, k\neq l} Q_{lk} $$
$$ = 2\, e^{-w_{l,l}/2} -  2 \, \sum_{i=1}^n e^{-w_{i,l}/2} + 2\, \sum_{k: \, k\neq l} e^{-w_{k,l}/2} $$
$$ = -  2 \, \sum_{i=1}^n e^{-w_{i,l}/2} + 2\, \sum_{k=1}^n e^{-w_{k,l}/2} = 0.$$

\smallskip

Note that the expression \eqref{boro} for the matrix $Q$ depends on the eigenvalues $e^{-\lambda_i}$, $i \in\{1,2..,n\}$, and not the specific eigenfunctions  $\eta_{i}$, $i \in\{1,2..,n\}$, of $\sigma$. This means that many density matrices $\sigma$ can determine the same matrix $Q$.

\smallskip
Theorem  4.2 in \cite{CaMa} claims:
\begin{theorem} Assume that $\mathcal{L}$ is  of the form \eqref{gre} for $\sigma$. The matrix
$Q$, given  by
$Q_{i,j} = $ Tr $[F_{i,i}  \mathcal{L} F_{j,j}]$ is line sum zero. The invariant probability for the classical continuous time Markov chain with infinitesimal generator $Q$ is
\begin{equation} \label{klr2}\overrightarrow{\sigma}=(\sigma_1, \sigma_2,..., \sigma_n)=(Tr[\sigma F_{1,1}],Tr[\sigma F_{2,2}],... Tr[\sigma F_{n,n}]).
\end{equation}
The classical detailed balance condition
\begin{equation} \label{klr3} \sigma_i Q_{i,k} = \sigma_k Q_{k,i}
\end{equation}
is   satisfied.

Consider the Chapman-Kolmogorov linear differential equation on $\overrightarrow{\rho}(t) =(\rho_1(t),\rho_2(t),...,\rho_n(t)) \in \mathbb{R}^n,$
\begin{equation} \label{klr} \frac{d}{dt} \rho_l (t) = \sum_{k=1}^n (\rho_k(t) Q_{k,l} - \rho_k(t) Q_{l,k}).
\end{equation}

This is equivalent to
\begin{equation} \label{klr14} \overrightarrow{\rho}(t) = e^{ t Q^\dag} (\overrightarrow{\rho}(0)).
\end{equation}

The occupation time probability in $\{1,2..,n\}$ of the continuous time Markov Chain is described by
$\overrightarrow{\rho}(t)$.

 $\overrightarrow{\rho}(t) $ satisfies \eqref{klr}, if and only if, the quantum continuous time  evolution $\rho(t)$ in
 $\mathcal{A}$  satisfies
\begin{equation} \label{klr1}  \rho(t) = \sum_{k=1}^n \frac{\rho_k (t)}{Tr (F_{k,k} )}\, F_{k,k}.
\end{equation}

\end{theorem}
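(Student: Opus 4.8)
The plan is to read all four assertions off the explicit formula \eqref{asl} for the entries of $Q$, combined with the identity $\sigma_k = \tr[\sigma F_{k,k}] = e^{-\lambda_k}$, which holds because $\sigma\, F_{k,k} = e^{-\lambda_k} F_{k,k}$ by \eqref{tyty}. The line-sum-zero property is exactly the computation displayed just before the statement, so $\sum_k Q_{l,k}=0$ for every $l$; in particular $\overrightarrow{\sigma}=(e^{-\lambda_1},\dots,e^{-\lambda_n})$ is a genuine probability vector by \eqref{tutu}. I would then treat the remaining claims in the order: classical detailed balance, stationarity of $\overrightarrow{\sigma}$, and finally the classical--quantum equivalence.

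For the detailed balance condition \eqref{klr3} it suffices to take $i\ne k$, the case $i=k$ being trivial. Using the off-diagonal branch of \eqref{asl}, namely $Q_{i,k}=2\,e^{(\lambda_i-\lambda_k)/2}$, I compute
\[ \sigma_i Q_{i,k} = e^{-\lambda_i}\,2\,e^{(\lambda_i-\lambda_k)/2} = 2\,e^{-(\lambda_i+\lambda_k)/2} = e^{-\lambda_k}\,2\,e^{(\lambda_k-\lambda_i)/2} = \sigma_k Q_{k,i}, \]
which is manifestly symmetric in $(i,k)$ and gives \eqref{klr3}. Stationarity \eqref{klr2} then comes for free: summing detailed balance over $i$ and invoking line-sum-zero,
\[ \sum_{i=1}^n \sigma_i Q_{i,k} = \sum_{i=1}^n \sigma_k Q_{k,i} = \sigma_k \sum_{i=1}^n Q_{k,i} = 0, \]
so $Q^\dag \overrightarrow{\sigma}=0$ and $\overrightarrow{\sigma}$ is invariant for $e^{tQ^\dag}$.

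The heart of the argument is the equivalence between \eqref{klr} and the quantum evolution \eqref{klr1}. I would exploit that the spectral projections $F_{k,k}=\outerp{\eta_k}{\eta_k}$ span a commutative subalgebra $\mathcal{D}\subset M_n$ and are orthonormal, $\tr[F_{l,l}F_{k,k}]=\delta_{l,k}$. First I would record that $\mathcal{L}^\dag$ leaves $\mathcal{D}$ invariant: this follows either from the computation of $\mathcal{L}(F_{k,k})$ in the preceding Lemma, which is already diagonal in the $\eta$-basis (so by duality $\mathcal{L}^\dag(F_{k,k})$ is too), or structurally from $\mathcal{L}\circ\Delta_\sigma=\Delta_\sigma\circ\mathcal{L}$ of Proposition \ref{defb}. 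Next I would identify the restriction: since $F_{l,l}$ is self-adjoint and $\mathcal{L}$ preserves self-adjointness,
\[ \tr[F_{l,l}\,\mathcal{L}^\dag(F_{k,k})] = \langle F_{l,l},\mathcal{L}^\dag(F_{k,k})\rangle = \langle \mathcal{L}(F_{l,l}),F_{k,k}\rangle = Q_{k,l}, \]
so orthonormality gives $\mathcal{L}^\dag(F_{k,k})=\sum_l Q_{k,l}F_{l,l}$. Writing $\rho(t)=\sum_k \rho_k(t)F_{k,k}$ (note $\tr F_{k,k}=1$ under the simple-spectrum hypothesis, so \eqref{klr1} is precisely this ansatz) and applying $\tr[F_{l,l}\,\cdot\,]$ to the Schr\"odinger equation $\tfrac{d}{dt}\rho(t)=\mathcal{L}^\dag(\rho(t))$ yields $\dot\rho_l(t)=\sum_k \rho_k(t)Q_{k,l}$, i.e. $\dot{\overrightarrow{\rho}}=Q^\dag\overrightarrow{\rho}$; this is the master equation \eqref{klr}, the outflow being absorbed into the diagonal entries of $Q$ via line-sum-zero, and by the standard theory of linear ODEs it is equivalent to \eqref{klr14}. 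The converse direction is the same computation read backwards.

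I expect the main obstacle to be the invariance $\mathcal{L}^\dag(\mathcal{D})\subseteq\mathcal{D}$ together with the clean identification of the restricted generator with $Q^\dag$; once these are in place, every other claim is a one-line algebraic manipulation or a citation to elementary ODE theory. The occupation-time interpretation is then immediate, since $\rho_l(t)=\tr[F_{l,l}\rho(t)]$ are nonnegative and sum to one whenever $\rho(t)$ is a density matrix, a property preserved by the Markov semigroup $e^{t\mathcal{L}^\dag}$.
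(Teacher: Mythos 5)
Your proposal is correct, but note that the paper itself does not prove this theorem: it is quoted as Theorem 4.2 of \cite{CaMa}, and what the paper supplies around it is only the explicit formula \eqref{boro} for $Q_{l,k}$ (the preceding Lemma), the line-sum-zero computation, the identity $\sigma_j=\text{Tr}[\sigma F_{j,j}]=e^{-\lambda_j}$, the detailed-balance verification, and the assertion $\overrightarrow{\sigma}Q=0$. Your treatment of those items coincides with the paper's computations (your derivation of stationarity by summing detailed balance over $i$ and invoking line-sum-zero is a cleaner alternative to the paper's direct appeal to \eqref{asl}). The genuinely new content in your write-up is the classical--quantum equivalence: proving that $\mathcal{L}^\dag$ preserves the commutative algebra $\mathcal{D}=\mathrm{span}\{F_{k,k}\}$ and that its restriction is $Q^\dag$, i.e.\ $\mathcal{L}^\dag(F_{k,k})=\sum_l Q_{k,l}F_{l,l}$; the paper leaves this entirely to \cite{CaMa}. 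That is the right way to organize the proof, and your identification $\text{Tr}[F_{l,l}\,\mathcal{L}^\dag(F_{k,k})]=Q_{k,l}$ via self-adjointness preservation is correct.

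Two caveats. First, your parenthetical duality argument is too quick: $\mathcal{L}(\mathcal{D})\subseteq\mathcal{D}$ does not by itself give $\mathcal{L}^\dag(\mathcal{D})\subseteq\mathcal{D}$; duality gives $\mathcal{L}^\dag(\mathcal{D})\subseteq\mathcal{D}$ if and only if $\mathcal{L}(\mathcal{D}^\perp)\subseteq\mathcal{D}^\perp$. This is where your second, structural route is genuinely needed: $\Delta_\sigma$ is self-adjoint and commutes with $\mathcal{L}$ by Proposition \ref{defb}, and under the paper's standing assumption that the eigenvalues of $\sigma$ are simple, the eigenspace of $\Delta_\sigma$ for the eigenvalue $1$ is exactly $\mathcal{D}$, so $\mathcal{L}$ preserves both $\mathcal{D}$ and $\mathcal{D}^\perp$. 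Alternatively, a direct computation with \eqref{gre1}, parallel to the proof of Lemma \ref{iin}, shows $[V_{i,j}F_{k,k},V_{i,j}^*]$ and $[V_{i,j},F_{k,k}V_{i,j}^*]$ are diagonal, which settles the invariance without any simplicity hypothesis. Second, equation \eqref{klr} as printed reads $\sum_k(\rho_k Q_{k,l}-\rho_k Q_{l,k})$; the second sum does not vanish in general, so the printed equation is not equivalent to \eqref{klr14}. It is evidently a typo for the usual master equation with $\rho_l Q_{l,k}$ in the outflow term, whose sum does vanish by line-sum-zero; your proof establishes $\dot\rho_l=\sum_k\rho_k Q_{k,l}$ and thus implicitly uses the corrected form, which you should state explicitly.
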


Remember that from \eqref{tyty}
 \begin{equation} \label{klr10}\sigma = \sum_{k=1}^n  e^{-\, \lambda_k} \outerp{\eta_k}{\eta_k}.
 \end{equation}

 Then, from \eqref{klr2}, given $j$
 \begin{equation} \label{klr11}\sigma_j = \tr \,[\sigma\, \outerp{\eta_j}{\eta_j}\,]=\tr\,\Big[  \sum_{k=1}^n  e^{-\, \lambda_k} \outerp{\eta_k}{\eta_k} \, \outerp{\eta_j}{ \eta_j}\,\Big]=e^{- \lambda_j}.
 \end{equation}

  Expression \eqref{klr3} means for $k \neq l$
  \begin{equation} \label{klr12} e^{-\lambda_k} \,   e^{\lambda_k/2 -\lambda_l/2 }=   e^{-\lambda_k/2 -\lambda_l/2 } =  e^{-\lambda_l}\, e^{\lambda_l/2 -\lambda_k/2 }.
  \end{equation}

  \medskip

  From \eqref{asl} and \eqref{klr2} we get
  \begin{equation} \label{klr12} \overrightarrow{\sigma} Q=(\sigma_1, \sigma_2,..., \sigma_n) Q=(0,0,...,0).
  \end{equation}

\begin{example}\label{quantum to classical - example sigma diagonal}
Let $$\sigma=\begin{pmatrix}
\frac{1}{2} &0   &0\\
0 & \frac{1}{3}  &0 \\
0& 0& \frac{1}{6}
\end{pmatrix}.$$
Then $$h=-\log \, \sigma = \begin{pmatrix}
\log 2 &  0 &0 \\
 0& \log 3  & 0\\
0& 0& \log 6
\end{pmatrix},$$
\noindent so $\lambda_1=\log 2$, $\lambda_2=\log 3$ and $\lambda_3=\log 6$. The $Q$ matrix given by the expression (\ref{boro}) has entries

$$Q_{12}=2 e^{(\log 2 - \log 3)/2}=2\left(\frac{2}{3}\right)^{1/2}=2\frac{\sqrt 2}{\sqrt 3}$$
$$Q_{13}=2 e^{(\log 2 - \log 6)/2}=2\left(\frac{2}{6}\right)^{1/2}=2\frac{1}{\sqrt 3}$$
$$Q_{11} = 2\left(-\frac{\sqrt 2 }{\sqrt 3}- \frac{1}{\sqrt 3}\right)$$

$$Q_{21}=2 e^{(\log 3 - \log 2)/2}=2\left(\frac{3}{2}\right)^{1/2}=2\frac{\sqrt 3}{\sqrt 2}$$
$$Q_{23}=2 e^{(\log 3 - \log 6)/2}=2\left(\frac{3}{6}\right)^{1/2}=2\frac{\sqrt 3}{\sqrt 6}$$
$$Q_{22} = 2(-\frac{\sqrt 3}{\sqrt 2}- \frac{1}{\frac{\sqrt 3}{\sqrt 6}})$$

$$Q_{31}=2 e^{(\log 6 - \log 2)/2}=2\left(\frac{6}{2}\right)^{1/2}=2\sqrt 3$$
$$Q_{32}=2 e^{(\log 6 - \log 3)/2}=2\left(\frac{6}{3}\right)^{1/2}=2\sqrt 2$$
$$Q_{33} = 2(-\sqrt 3-\sqrt 2)$$

Thus
$$ Q = 2 \begin{pmatrix}
-\frac{\sqrt 2 }{\sqrt 3}- \frac{1}{\sqrt 3} & \frac{\sqrt 2}{\sqrt 3} & \frac{1}{\sqrt 3} \vspace{6pt} \\  \vspace{6pt}
\frac{\sqrt 3}{\sqrt 2}  & -\frac{\sqrt 3}{\sqrt 2}- \frac{\sqrt 3}{\sqrt 6} & \frac{\sqrt 3}{\sqrt 6}\\
\sqrt 3 & \sqrt 2 & -\sqrt 3-\sqrt 2
\end{pmatrix}.$$

We should have that $\vec{\sigma}=(\frac{1}{2},\frac{1}{3},\frac{1}{6})$ is the invariant vector. In fact,

\begin{align*}
 \frac{1}{2}(\vec{\sigma}Q)_1 &= -\frac{(\sqrt 2+1)}{2 \sqrt 3}+\frac{\sqrt 3}{3\sqrt 2} + \frac{\sqrt 3}{6} \\
&= \frac{-\sqrt 3 \sqrt 2 -\sqrt 3 + \sqrt 2 \sqrt 3 + \sqrt 3}{6}=0.
\end{align*}

\begin{align*}
\frac{1}{2}(\vec{\sigma}Q)_2 &= \frac{\sqrt 2}{2\sqrt 3} - \frac{(1+\sqrt 3)}{3 \sqrt 2} + \frac{\sqrt 2}{6} \\
 &= \frac{\sqrt 2 \sqrt 3 -\sqrt 2 - \sqrt 2 \sqrt 3 + \sqrt 2}{6}=0.
\end{align*}

\begin{align*}
 \frac{1}{2}(\vec{\sigma}Q)_3 &= \frac{1}{2\sqrt 3}  + \frac{1}{3 \sqrt 2} - \frac{(\sqrt 2 + \sqrt 3)}{6} \\
 &= \frac{\sqrt 3 +\sqrt 2 - \sqrt 2 - \sqrt 3}{6}=0.
\end{align*}

\end{example}

\begin{example}\label{quantum to classical - example sigma complex}
Let $$\sigma=\begin{pmatrix}
\dfrac{1}{4}  & 0            & \dfrac{i}{8}       \\
0             & \dfrac{1}{2} & 0                  \\
-\dfrac{i}{8} & 0            & \dfrac{1}{4}
\end{pmatrix}.$$

The eigenvalues of $\sigma$ are $\frac{1}{8}$, $\frac{3}{8}$ and $\frac{1}{2}$.
Then $h=-\log \, \sigma$ has eigenvalues
$\lambda_1=\log{8}$,
$\lambda_2=\log{\dfrac{8}{3}}$ and
$\lambda_3=\log{2}$. So, the $Q$ matrix given by the expression (\ref{boro}) has entries

$$Q_{12}=2 e^{(\log{8} - \log{\frac{8}{3}})/2}=2\left(3\right)^{1/2}=2\sqrt 3.$$
$$Q_{13}=2 e^{(\log{8} - \log{2})/2}=2\left(\frac{8}{2}\right)^{1/2}=4.$$
$$Q_{11} = -2\left(\sqrt 3 + 2\right).$$

$$Q_{21}=2 e^{(\log{\frac{8}{3}} - \log{8})/2}=2\left(\frac{1}{3}\right)^{1/2}=\frac{2}{\sqrt 3}.$$
$$Q_{23}=2 e^{(\log{\frac{8}{3}} - \log{2})/2}=2\left(\frac{4}{3}\right)^{1/2}=\frac{4}{\sqrt 3}.$$
$$Q_{22} = -\frac{6}{\sqrt{3}}.$$

$$Q_{31}=2 e^{(\log{2} - \log{8})/2}=2\left(\frac{1}{4}\right)^{1/2}= 1.$$
$$Q_{32}=2 e^{(\log{2} - \log{\frac{8}{3}})/2}=2\left(\frac{3}{4}\right)^{1/2}=\sqrt 3.$$
$$Q_{33} = -\left(1 + \sqrt 3\right).$$

Thus

$$ Q = \begin{pmatrix}
-2\left(\sqrt 3 + 2\right) && 2\sqrt 3             && 4                           \\\\

\dfrac{2}{\sqrt 3}         && -\dfrac{6}{\sqrt{3}} && \dfrac{4}{\sqrt 3}          \\\\

        1                  && \sqrt 3              && -\left(1 + \sqrt 3\right)
\end{pmatrix}.$$

We should have that $\vec{\sigma}=(\frac{1}{8},\frac{3}{8},\frac{1}{2})$ is the invariant vector. In fact,

\begin{align*}
(\vec{\sigma}Q)_1 &=
    \frac{1}{8}\left(-2\sqrt 3 - 4\right) +
    \frac{3}{8}\frac{2}{\sqrt 3} +
    \frac{1}{2} \\
&= -\frac{\sqrt 3}{4} - \frac{1}{2} + \frac{\sqrt 3}{4} + \frac{1}{2}
=0.
\end{align*}

\begin{align*}
(\vec{\sigma}Q)_2 &=
    \frac{\sqrt 3}{4} -
    \frac{3\sqrt 3}{4} +
    \frac{2\sqrt 3}{4} = 0.
\end{align*}

\begin{align*}
(\vec{\sigma}Q)_3 &=
    \frac{1}{2} +
    \frac{\sqrt 3}{2} -
    \frac{1}{2}\left(1 + \sqrt 3\right) = 0.
\end{align*}

\end{example}

Related results are described in (3) and (4) in \cite{Leeuw}.

\medskip

{\center IME -  UFRGS, Av. Bento Gonvalves 9500 - 91.500 Porto Alegre, Brazil}

\smallskip

Email of J. E. Brasil is jaderebrasil@gmail.com

Email of J. Knorst is  jojoknorst@gmail.com

Email of A. O. Lopes is arturoscar.lopes@gmail.com

\smallskip

\end{document}